\documentclass[sigconf,nonacm]{acmart}
\AtBeginDocument{%
  }

\setcopyright{none}
\copyrightyear{2026}
\acmYear{2026}
\acmDOI{}
\acmConference[CCS '26]{Proceedings of the 2026 ACM SIGSAC Conference on Computer and Communications Security}{November 15--19, 2026}{The Hague, The Netherlands}
\acmISBN{}

\renewcommand\footnotetextcopyrightpermission[1]{}

\usepackage{xcolor}
\usepackage{courier} 
\usepackage{balance} 
\usepackage{url}

\usepackage{wrapfig}
\usepackage{graphicx}
\usepackage{textcomp}

\usepackage{subcaption}
\usepackage{comment}

\usepackage{tikz}
\usepackage{amsfonts}
\usepackage{mathrsfs}
\usepackage{makecell}
\usepackage{bm}
\usepackage{amsmath,amsfonts,amsthm,mathtools} 
\usepackage{bbm}
\usepackage[framemethod=TikZ]{mdframed}
\usepackage{hyperref}

\usepackage{booktabs}
\usepackage{multicol}
\usepackage{multirow}
\usepackage{colortbl}
\usepackage{tcolorbox}
\usepackage{float}

\usepackage{soul}
\usepackage{tabularx}
\usepackage{xspace}
\usepackage{bbding}
\usepackage{pifont}

\usepackage{wasysym}
\usepackage{threeparttable}

\usepackage[linesnumbered,ruled,vlined]{algorithm2e} 
\usepackage{cleveref}
\usepackage{acronym}

\newenvironment{packeditemize}{
\begin{list}{$\bullet$}{
\setlength{\labelwidth}{8pt}
\setlength{\itemsep}{2pt}
\setlength{\leftmargin}{\labelwidth}
\addtolength{\leftmargin}{\labelsep}
\setlength{\parindent}{0pt}
\setlength{\listparindent}{\parindent}
\setlength{\parsep}{0pt}
\setlength{\topsep}{2pt}}}{\end{list}}

\newenvironment{packedenumerate}{
\begin{list}{\arabic{enumi}.}{
\usecounter{enumi}
\setlength{\labelwidth}{10pt}
\setlength{\itemsep}{2pt}
\setlength{\leftmargin}{\labelwidth}
\addtolength{\leftmargin}{\labelsep}
\setlength{\parindent}{0pt}
\setlength{\listparindent}{\parindent}
\setlength{\parsep}{0pt}
\setlength{\topsep}{2pt}}}
{\end{list}}

\mdfdefinestyle{customstyle}{%
    backgroundcolor=white,%
    linewidth=1pt,
    frametitlefont=\bfseries,%
    roundcorner=5pt,
    innertopmargin=5pt,
    innerbottommargin=5pt,
    nobreak=true
}

\acrodef{tee}[TEE]{Trusted Execution Environment}

\newcommand{\boxname}{{\textsc{DIPBox}}\xspace}

\newcommand{\vicf}{f_\cV}
\newcommand{\vicdst}{\cX_\cV}
\newcommand{\vicdistri}{\cP_\cV}

\newcommand{\auditset}{\cX_\cJ}

\newcommand{\advf}{f_\cA}
\newcommand{\advdst}{\cX_\cA}
\newcommand{\advdistriaccess}{\cP_\cA^{(0)}}

\newcommand{\advdistri}{\cP_\cA}

\renewcommand{\epsilon}{\varepsilon}

\def\:#1{\protect \ifmmode {\mathbf{#1}} \else {\textbf{#1}} \fi}

\newcommand{\bx}{\mathbf{x}}

\newcommand{\cA}{\mathcal{A}}
\newcommand{\cB}{\mathcal{B}}

\newcommand{\cJ}{\mathcal{J}}

\newcommand{\cM}{\mathcal{M}}

\newcommand{\cP}{\mathcal{P}}

\newcommand{\cV}{\mathcal{V}}

\newcommand{\cX}{\mathcal{X}}

\renewcommand{\epsilon}{\varepsilon}

\DeclareMathOperator*{\expectedvalue}{\mathbb{E}}

\newcommand{\norm}[1]{\left\lVert#1\right\rVert}
\newcommand{\abs}[1]{\left\lvert#1\right\rvert}

\theoremstyle{definition}
\newtheorem{definition}{Definition}[section]
\newtheorem{theorem}{Theorem}[section]
\newtheorem{lemma}[theorem]{Lemma}
\newtheorem{proposition}{Proposition}

\newcommand{\eg}{\hbox{{e.g.}}\xspace}

\newcommand{\ie}{\hbox{{i.e.}}\xspace}

\newcommand{\aka}{\hbox{{a.k.a.}}\xspace}

\newcommand{\bheading}[1]{{\vspace{4pt}\noindent{\textbf{#1}}}}
\newcommand{\iheading}[1]{{\vspace{4pt}\noindent{\textit{#1}}}} 
\newcommand{\redbox}[1]{\colorbox{red!20}{#1}}
\newcommand{\greenbox}[1]{\colorbox{green!20}{#1}}

\begin{document}

\title{\boxname: A Multi-scale Testing Framework for Tracking \\Dataset Regeneration}
\author{Tian Dong$^{\ast}$, Yan Meng$^{\ast}$, Shaofeng Li$^{\dagger}$, Guoxing Chen$^{\ast}$, Yuling Chen$^{\ddagger}$, Zhen Liu$^{\ast}$, Haojin Zhu$^{\ast}$, \and 
Hao Chen$^{\S}$}
\affiliation{
\institution{$^{\ast}$Shanghai Jiao Tong University, China}
\institution{$^{\dagger}$Southeast University, China}
\institution{$^{\ddagger}$Guizhou University, China}
\institution{$^{\S}$The University of Hong Kong, China}
\country{}
}

\begin{abstract}
Training datasets have tremendous proprietary value and are vulnerable to unauthorized copying.
Existing defenses mainly focus on tracking individual data points, but pay little attention to the threat of dataset regeneration.
Through a measurement study of public tumor datasets, we identify substantial real-world partial-dataset replication, raising concerns about potential license noncompliance.
To counter the challenge of tracking previously unknown adversarial regeneration, our key insight is that regeneration that preserves model utility inevitably preserves measurable signals across multiple feature scales.
We categorize these dataset features into sample-, set-, and distribution-level features and design four similarity metrics to accurately identify regeneration.
Based on these metrics, we develop \boxname, which to our knowledge is the first testing framework that tracks regeneration suspects via multi-scale similarity testing across a spectrum of defender access settings, from limited to full information.
We further provide a learning-theoretic analysis that justifies these multi-scale metrics and formalizes an inherent utility--divergence trade-off, implying fundamental limits on evasive regeneration.
Extensive experiments on 16 vision and text base datasets, 320 regenerated datasets, and 590 derived models validate that \boxname outperforms previous solutions while characterizing its robustness and limits under three adaptive attacks.

\end{abstract}

\begin{CCSXML}
<ccs2012>
   <concept>
       <concept_id>10002978.10002991.10002996</concept_id>
       <concept_desc>Security and privacy~Digital rights management</concept_desc>
       <concept_significance>500</concept_significance>
       </concept>
 </ccs2012>
\end{CCSXML}

\ccsdesc[500]{Security and privacy~Digital rights management}

\keywords{Dataset regeneration, unauthorized usage, similarity testing}

\maketitle

\section{Introduction}

Training datasets are \textit{copyrightable} assets because of the high cost of curation and the increasing scarcity of high-quality data sources~\cite{exhaust_data_24}, but they are routinely replicated and used without the owner’s consent, leading to potential copyright or license violations~\cite{llama_leak_23}.
For example, more than 8{,}284 open-source datasets on Hugging Face use Creative Commons Non-Commercial licenses, among which 1{,}621 also prohibit derivatives, but practical compliance mechanisms are lacking~\cite{DBLP:journals/corr/abs-2310-16787}.
Even worse, the absence of an explicit license for over 70\% of datasets on GitHub and Hugging Face complicates legal accountability~\cite{DBLP:journals/corr/abs-2310-16787}.

\begin{figure}[t] 
    \centering
    \includegraphics[width=\linewidth]{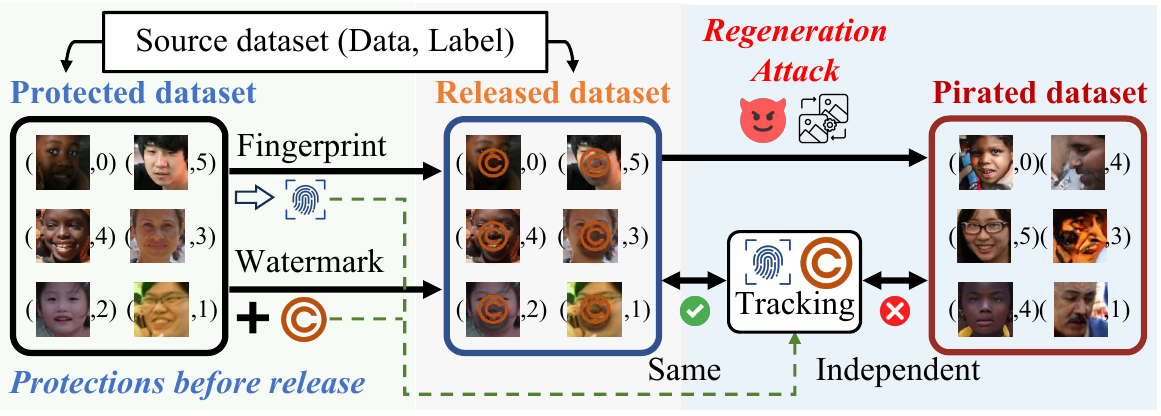}
    \caption{The adversary regenerates a derived dataset to evade detection and gain economic benefit.}
    \label{fig:regeneration}
\end{figure}

Existing dataset tracking approaches can be categorized into \emph{proactive} and \emph{passive} defenses.
Proactive defenses watermark datasets prior to release via backdoors~\cite{li2022untargeted,yimingli_tifs23} or feature perturbations~\cite{DBLP:conf/icml/SablayrollesDSJ20,datause_ccs2024}.
Passive defenses, which apply to in-the-wild datasets, instead exploit model memorization to infer whether a model was trained on a target dataset in whole or in part~\cite{maini2021dataset,huang2022dataset,DBLP:journals/tifs/LiuXMW22,DBLP:conf/ndss/DongLCXZ023,dataset_auditing_rl}.
A key limitation of prior art is its reliance on memorization of individual data points to track a dataset in its entirety, together with the assumption that adversaries directly reuse or at most mildly modify the pirated data.
This leaves room for real-world adversaries to evade detection via \textit{regeneration attacks} (see \Cref{fig:regeneration}), including aggressive per-datum modification, set regrouping~\cite{DBLP:journals/corr/abs-2310-16787}, or generative reproduction (\eg, using Stable Diffusion) that yields no identical samples while preserving equivalent model-training utility~\cite{wang2024do}.

\begin{table*}[t]
\centering
\caption{Comparison with prior work on dataset regeneration tracking.
In the suspect-dataset access column, \CIRCLE denotes no dataset access and \LEFTcircle denotes access to a subset.
In the suspect-model access column, \CIRCLE denotes black-box query access and \Circle denotes white-box access.
Commas indicate support for multiple settings.
}
\label{tab:taxonomy}
\resizebox{\textwidth}{!}{
\begin{tabular}{cccccccc}
\toprule
\multirow{2}{*}{\textbf{Methods}} & \multirow{2}{*}{\textbf{Type}} & \multirow{2}{*}{\textbf{Non-Invasiveness}} & \multirow{2}{*}{\begin{tabular}[c]{@{}c@{}}\textbf{Access to} \\ \textbf{Suspect Dataset $\advdst$}\end{tabular}} & \multirow{2}{*}{\begin{tabular}[c]{@{}c@{}}\textbf{Access to} \\ \textbf{Suspect Model} $\advf$ \end{tabular}} & \multicolumn{3}{c}{\textbf{Regeneration Attacks}}                                                            \\ \cline{6-8} 
 &           &                                   &                                 &                               & \multicolumn{1}{c}{\ding{182} Post-processing} & \multicolumn{1}{c}{\ding{183} Reorganization} & \ding{184} Polishing \\ \midrule

\rowcolor{gray!15}
Radioactive data~\cite{DBLP:conf/icml/SablayrollesDSJ20}, DVBW~\cite{yimingli_tifs23} &        Watermarking  &   \ding{55}         &   \CIRCLE         & \Circle,\CIRCLE                    & \multicolumn{1}{c}{\ding{55}}                    & \multicolumn{1}{c}{\ding{55}}                   &   \ding{55}   \\ \hline
Data-use Audit~\cite{datause_ccs2024}   &   Watermarking  &    \ding{55}       &      \CIRCLE          &      \CIRCLE             & \multicolumn{1}{c}{\ding{51}}                    & \multicolumn{1}{c}{\ding{51}}                   &    \ding{55} \\ \hline   \rowcolor{gray!15}

Dataset Inference (DI)~\cite{maini2021dataset,maini2024llm} &        Inference  &   \ding{51}         &   \CIRCLE         & \Circle,\CIRCLE                    & \multicolumn{1}{c}{\ding{55}}                    & \multicolumn{1}{c}{\ding{55}}                   &   \ding{55}   \\ \hline 
EMA~\cite{huang2022dataset}, MeFA~\cite{DBLP:journals/tifs/LiuXMW22}, RAI$^2$~\cite{DBLP:conf/ndss/DongLCXZ023}      &  Inference     &   \ding{51}         &         \CIRCLE          &       \CIRCLE      & \multicolumn{1}{c}{\ding{51}}                    & \multicolumn{1}{c}{\ding{51}}                   & \ding{55}
\\ \hline \rowcolor{gray!15}
ORL-AUDITOR~\cite{dataset_auditing_rl}      &  Inference     &   \ding{51}         &         \CIRCLE           &       \CIRCLE      & \multicolumn{1}{c}{\ding{55}}                    & \multicolumn{1}{c}{\ding{51}}                   & \ding{55}    
\\ \hline

\textbf{\boxname~(Ours)}   &    Similarity Testing    &               \ding{51}  &       \LEFTcircle,\CIRCLE     &                \Circle,\CIRCLE  & \multicolumn{1}{c}{\ding{51}}                    & \multicolumn{1}{c}{\ding{51}}                   &   \ding{51}  \\ \bottomrule
\end{tabular}
}

\end{table*}

To validate this threat, we first conduct a measurement study of publicly available tumor datasets and identify substantial duplication that suggests potential license noncompliance.
For example, a dataset licensed to restrict derivatives and commercial use appears to have been repackaged into four other open-source datasets, with an exact-duplicate overlap rate ranging from 49\% to 100\%.
These measurements confirm the existence of regenerated datasets in practice and motivate our research question: \textit{How can we robustly track dataset regeneration for datasets released on public platforms?}

Similar to other digital artifacts, the core challenge lies in handling varied and unforeseen regeneration strategies under the constraints of limited access (\eg, a pirated trial subset) or no access (\eg, only a trained model) to the full pirated dataset.
To tackle this challenge, we observe that, regardless of efforts to obfuscate regeneration, an adversary typically seeks to preserve dataset quality so that equally accurate models can be trained—an objective that resembles software infringement.
Consequently, inspired by software copyright protections that compare multiple components~\cite{register_software, software_copyright_explained} (\eg, code, functionality, interfaces), we adopt a multi-scale similarity testing approach that delivers robustness and extensibility to new tracking metrics.

This paper introduces \boxname, which to our knowledge is the first framework designed for third parties (\eg, responsible sharing platforms) to track dataset regeneration via multi-scale testing against the source.
We identify three core feature levels targeted by regeneration attacks: sample-level, set-level (shallow features), and distribution-level (deep features).
To \emph{quantify} feature discrepancy, in addition to existing metrics such as \textit{output distance}~\cite{DBLP:conf/ndss/DongLCXZ023,dataset_auditing_rl} and \textit{sample distance} that primarily capture shallow features, we introduce two new metrics—\textit{gradient distance} and \textit{distribution distance}—to generalize testing across scales.
We further design metric-adaptive test-case selection and a principled decision procedure, covering a broader spectrum of regeneration attacks under wider access settings compared with prior work (see \Cref{tab:taxonomy}) and provide a learning-theoretic analysis (see \Cref{sec:theory}) that (i) justifies why utility-preserving regeneration retains detectable signals, and (ii) formalizes a utility-divergence trade-off for evasive regeneration.

We implement \boxname as an automated toolbox to support future measurement and defense research.
To evaluate, we build a regeneration benchmark covering 320 regenerated datasets derived from 16 widely used base datasets and 590 trained models spanning 12 classic and foundational model architectures.
The results show that \boxname provides accurate, evidence-based identification of regenerated datasets, remains model-agnostic, and is resilient to diverse adversarial training conditions and regeneration attacks.
For instance, \boxname can track all distribution-level regeneration variants of FairFace containing $128\times 128$ facial images.
Additionally, \boxname surpasses state-of-the-art methods~\cite{datause_ccs2024,DBLP:conf/ndss/DongLCXZ023} with at least a 25\% higher TPR, due to its holistic, multi-scale characterization of dataset features.
Finally, we show that in our evaluated adaptive attacks, evading \boxname requires substantial utility loss or additional computation and memory cost.

\bheading{Contribution.}
In summary, our contributions are:
\begin{packeditemize}
\item We conduct a measurement study of real-world datasets and identify substantial duplication that suggests potential license noncompliance.
\item We formalize the problem of tracking dataset regeneration and present attack primitives.
\item We propose \boxname, which to our knowledge is the first extensible framework for identifying dataset regeneration based on multi-scale testing metrics with a learning-theoretic analysis.
\item We conduct extensive evaluations and implement \boxname as an automated toolbox to support future research.
\end{packeditemize}

\section{Motivating Measurement}
\label{subsec:measure-method}
We conduct a measurement of real-world datasets and identify potentially regenerated datasets.
We manually screened over 74 datasets with keywords ``brain'' and ``tumor'' on Hugging Face to ensure they target the same domain and task.
We filter for datasets with license ``CC-BY-NC-ND-4.0'', which restricts commercial use and derivative works.
Then, we select datasets with less restrictive or unspecified licenses using the following criteria:
(i) the dataset has been downloaded at least 10 times;
(ii) the dataset is not described as derived from existing datasets; and
(iii) the dataset is downloadable and usable.
In the end, we gathered 8 datasets (see \Cref{tab:measure-dataset}), annotated from ``D0'' to ``D7'', where ``D0'' and ``D1'' have licenses forbidding commercial use and derivatives, while the rest are more permissive or unspecified.
We unify the data size to $256\times 256$ and convert all images to grayscale.

\begin{table}[t]
\caption{Statistics of tumor datasets in the measurement.}
\label{tab:measure-dataset}
\resizebox{\linewidth}{!}{
\begin{tabular}{@{}ccccc@{}}
\toprule
\textbf{Dataset} & \textbf{Size} & \textbf{Classes} & \textbf{License} & \textbf{Repository ID} \\ \midrule
D0 & 275 & 10 & CC-BY-NC-ND-4.0 & TrainingDataPro/brain-mri-dataset \\
D1 & 447 & 4 & CC-BY-NC-ND-4.0 & haydenbanz/TumorVisionDatasets \\
D2 & 800 & 4 & Apache-2.0 & thubZ9/MRI\_Classification-Tumor \\
D3 & 506 & / & / & miladfa7/Brain-MRI-Images-for-Brain-Tumor-Detection \\
D4 & 7342 & 3 & / & youngp5/BrainMRI \\
D5 & 1429 & 2 & / & tanzuhuggingface/brainmri \\
D6 & 3264 & 4 & MIT & sartajbhuvaji/Brain-Tumor-Classification \\
D7 & 7023 & 4 & / & Simezu/brain-tumour-MRI-scan \\ \bottomrule
\end{tabular}
}

\end{table}

\begin{figure}[t]
    \centering
    \begin{subfigure}{0.24\columnwidth} 
        \includegraphics[width=\linewidth]{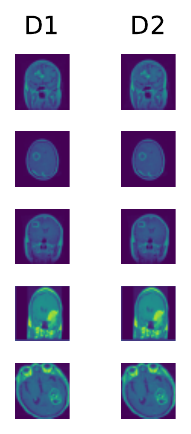}
        \caption{Exact duplicate examples.}
        \label{subfig:measureexample}
    \end{subfigure}
    \hfill
    \begin{subfigure}{0.7\columnwidth}
        \includegraphics[width=\linewidth]{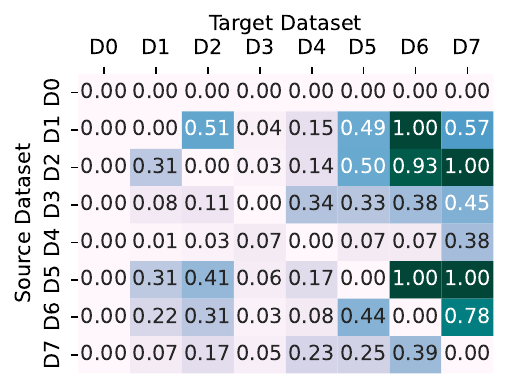}
        \caption{Proportion of common samples in the source dataset among the measured datasets.}
        \label{subfig:measureheatmap}
    \end{subfigure}
    \caption{Measurement on open-source brain tumor datasets.}
\end{figure}

We observe exact duplicate samples among the collected datasets.
\Cref{subfig:measureexample} presents examples of common samples ($L_2$ distance is 0) between two datasets D1 and D2.
Furthermore, for all datasets, we identify candidate duplicates by thresholding $L_2$ distance at 0.1 and manually verify the resulting pairs (see \Cref{appendix:measurementdetails} for verification details).
In \Cref{subfig:measureheatmap}, each grid shows the proportion of duplicate samples in the source dataset.
There are three notable findings:
\begin{enumerate}
    \item D0 has no overlapping samples with the measured datasets, which is also verified through manual checking.
Considering that D0 is a trial subset for attracting users to purchase the complete dataset, it may originate from a different data source than the commonly used tumor data.
\item Despite the more restrictive license, D1 dataset might be composed of the existing samples (\eg, subset of D6), because D1 was uploaded later than the other datasets.
This might raise compliance and attribution concerns.
\item Dataset regeneration is more prevalent among datasets with more permissive or unspecified licenses.
This not only confuses users about data sources but also causes potential disputes in the future.
\end{enumerate}

These findings suggest that real-world dataset regeneration can arise as sample-level or set-level reuse (Attack~\ding{182} and \ding{183} defined in \Cref{sec:threat_model}) and motivate us to investigate whether existing defenses remain robust under such regeneration primitives.
For example, since D0 is a trial subset used to attract users, an adversary may purchase, regenerate and resell the full dataset.

\section{Problem Statement}
\label{sec:threat_model}

In this section, we formalize the problem of tracking regenerated datasets and present the adversary's goals and capacities.

\subsection{Formulation \& Threat Model}
\label{subsec:formulation}
We focus on small-scale \textit{domain datasets}, typically used for training specialized classification models~\cite{zhao2023fedprompt} and widely used across the AI sector~\cite{practical_aisec_24}.
Constructing high-quality domain datasets is resource-intensive, with data source quality being crucial for ensuring model performance with real-world test data~\cite{DBLP:journals/corr/abs-2310-16787,martens2018importance}.
Consider an example of Electroencephalography (EEG) datasets: 
a medical company using up-to-date infrastructure from its clients' EEG data developed the FACED~\cite{chen2023large} dataset and achieved a 77.6\% accuracy for emotion recognition. 
In contrast, competitors without access to FACED but using similar yet lower-quality EEG datasets like SEED-IV~\cite{zheng2018emotionmeter} and DEAP~\cite{koelstra2011deap} could only achieve up to 27.1\% accuracy (Appendix~\ref{appendix:details_and_results}), highlighting the need for dataset tracking in the event of data breaches.

We consider three parties: the victim $\cV$, the adversary $\cA$, and the defender (\aka, auditor).
$\cV$ owns a proprietary dataset $\vicdst$.
The victim also trains a model $\vicf$ on $\vicdst$.
The value of $\vicdst$ lies in the fact that its distribution $\vicdistri$ closely matches the real-world distribution, leading to models that achieve higher accuracy on live test data $\cP_t$ after deployment, \ie, $\vicdst\sim\vicdistri\approx\cP_t$.
For the adversary $\cA$, directly collecting in-distribution data is infeasible due to exclusive data sources~\cite{bansal2022systematic} or high cost.
We assume $\cA$ has access only to a distribution $\advdistriaccess$ not identically distributed as $\cP_t$~\cite{DBLP:journals/corr/abs-1806-00582}, thus the model $f_{\advdistriaccess}$ trained on $\advdistriaccess$ achieves at least $\epsilon^u_{\cV,\cA}$ lower test accuracy:
\begin{equation}
    \expectedvalue_{(\bx, y)\sim\cP_t}[v(f_{\advdistriaccess}(\bx), y)] < \expectedvalue_{(\bx, y)\sim\cP_t}[v(\vicf(\bx), y)] - \epsilon^u_{\cV,\cA},
\end{equation}
where $v$ is the utility function and $\mathbb{E}$ is the expectation.

We denote the adversary's auxiliary dataset drawn from $\advdistriaccess$ as $\advdst^{(0)}\sim\advdistriaccess$.

We assume the adversary obtains $\vicdst$ through leakage or other unauthorized means, and then crafts and distributes $\advdst$ on platforms for profit.
The adversary also knows the existing protections~\cite{li2022untargeted, yimingli_tifs23,datause_ccs2024}, thus cannot directly reuse $\vicdst$; instead, they regenerate their dataset $\advdst$ (of distribution $\advdistri$) based on $\vicdst$ and train their model $f_\cA$.
The adversary can perform hyperparameter tuning to maximize the accuracy of $f_\cA$.
Formally, we model the regeneration and tracking as an adversarial game:

\begin{packedenumerate}
    \item Initially, the adversary has $\advdst^{(0)}$ and $\vicdst$.
    
    \item The adversary regenerates $\advdst\leftarrow\cA(\vicdst, \advdst^{(0)})$.
    
    \item The defender calculates $d_{\cV,\cA}\leftarrow \text{DIST}(\vicdst, \advdst)$.
    
    \item The adversary wins if $d_{\cV,\cA} > \tau_{\cV,\cA}$ and achieves:
    \begin{equation}
    \label{eq:game}
    \abs{\expectedvalue_{(\bx, y)\sim\cP_t}[v(f_\cA(\bx), y)] - \expectedvalue_{(\bx, y)\sim\cP_t}[v(f_\cV(\bx), y)]}<\epsilon^u_{\cV,\cA},
    \end{equation}
    where $v$ is the dataset utility function, $\epsilon^u_{\cV,\cA}$ is the utility drop limit, $\tau_{\cV,\cA}$ is the predefined threshold by $\cV$. 
\end{packedenumerate}

In this formulation, $\text{DIST}(\cdot,\cdot)$ is a generic distance-based auditing procedure; \boxname instantiates it using multiple distances and a joint decision rule (see \Cref{sec:system}).
The adversary aims to win the game by:
\begin{packeditemize}
    \item \textit{Preserving Utility}: $f_\cA$ has comparable accuracy on $\cP_t$.
    
    \item \textit{Evading Defense}: $\advdst$ cannot be judged as a copy of $\vicdst$.
\end{packeditemize}

\begin{figure}[t] 
    \centering
    \includegraphics[width=0.95\linewidth]{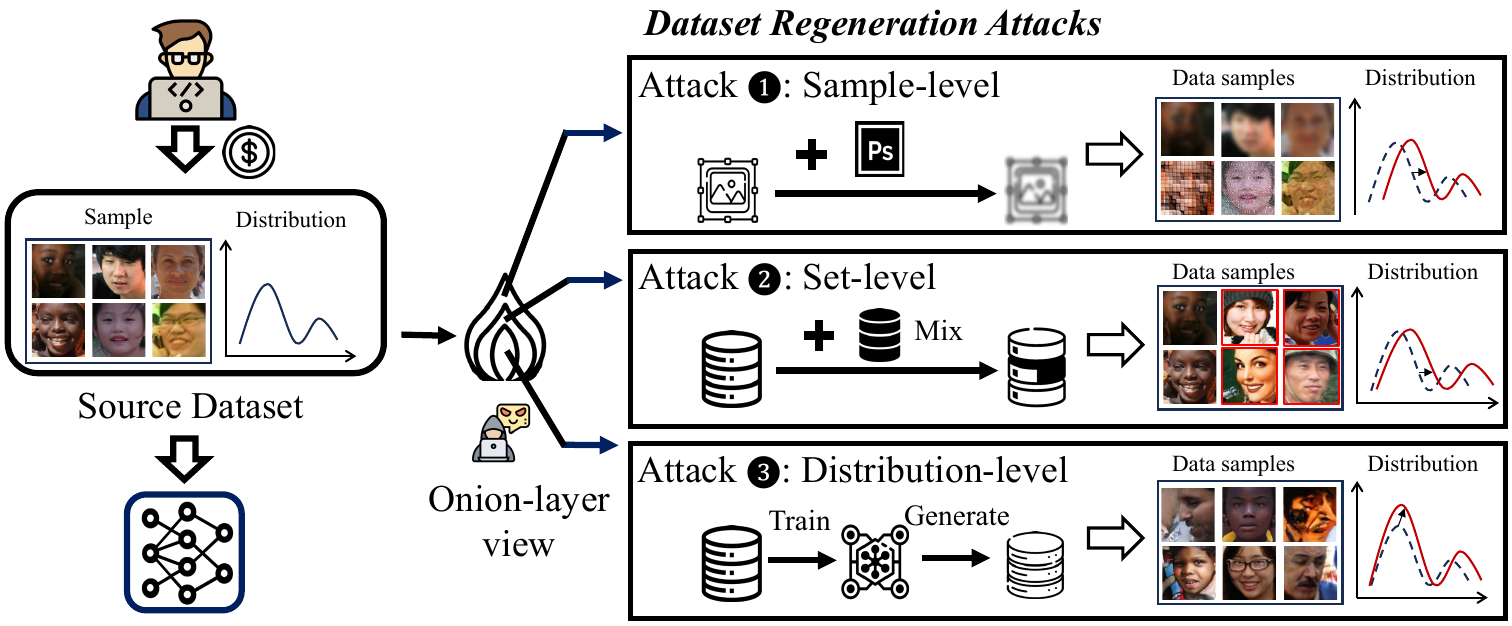}
    \caption{Overview of the regeneration attack primitives on our identified dataset features.
    The dataset has an onion-layered structure, where the surface-level feature (\eg, datum) is the skin and the inner feature (\ie, distribution) represents the onion core.
    The adversary can construct a pirated dataset by regeneration attacks at three levels.
    }
    \label{fig:attack}
\end{figure}
\subsection{Attack Primitives}

To satisfy Eq.~\ref{eq:game}, an adversary must perturb the dataset while ensuring its utility.
Motivated by software infringement analysis, which leverages both shallow (\eg, layout) and inherent (\eg, source code) features, we propose a top-down dataset feature hierarchy.
As shown in \Cref{fig:attack}, we distinguish shallow features at the sample and set levels from deep features at the distribution level, and categorize attack primitives accordingly into sample-, set-, and distribution-level primitives.

\begin{packeditemize}
    \item Attack \ding{182}: Post-processing Attack.
    The adversary post-processes (\eg, via Gaussian blurring) all samples in $\vicdst$ to prevent the auditor from identifying copied samples.

    \item Attack \ding{183}: Reorganization Attack.
    The adversary changes the dataset organization by combining a subset of $\vicdst$ with curated data following $\advdistriaccess$ to build $\advdst$.

    \item Attack \ding{184}: Polishing Attack.
    The adversary uses a generator (\ie, generative models trained on $\vicdst$ or public foundation models) to learn $\vicdistri$.
    The generator produces $\advdst$ from the same distribution as $\vicdst$ which has no identical sample but can train models of comparably high test accuracy~\cite{shumailov2024ai,wang2024do}.
\end{packeditemize}

We find that dataset regenerations can generally be decomposed into a combination of three attack primitives.
However, combining them (\eg, combining \ding{184} and \ding{182}) often compromises a dataset's utility (see \Cref{sec:evaluation}).
Therefore, instead of studying these complex combinations, we focus on individual primitives and vary their intensity to evaluate our tracking framework's robustness.
Recognizing that new primitives may appear in the future, we design our framework to be \textit{extensible} (see \Cref{sec:system}).

\begin{figure*}[t]
    \centering
    \includegraphics[width=0.95\linewidth]{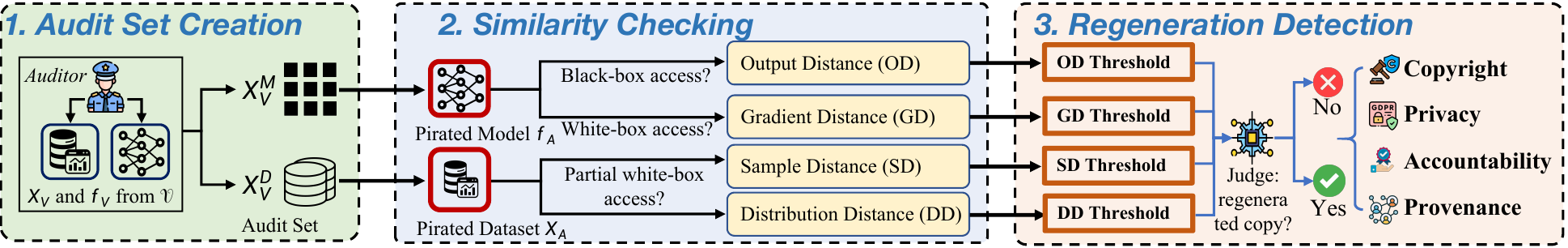}
    \caption{The workflow of \boxname. With the created audit set, the auditor measures the similarity scores based on their access to the target model/dataset.
    The final judgment is based on thresholding and applies to multiple downstream investigations.
    }
    \label{fig:overview}
\end{figure*}

\section{System Design of \boxname}
\label{sec:system}

In this section, we elaborate \boxname's design goals, defender assumptions, and workflow overview.

\subsection{Overview}

\subsubsection{Design Goals}
\boxname is tailored for tracking datasets from exclusive or scarce sources because of their proprietary value and the low likelihood of a coincidental match, which reflects a broader industry trend~\cite{NEURIPS2024_c3738949}.
\boxname can serve as a technical tool for auditors to support downstream legal assessment.
Moreover, \boxname can automatically screen uploaded datasets for potential regeneration attempts, analogous to automated plagiarism detection~\cite{arxiv_moderation} on preprint platforms. 
To sum up, the design goals include:
\begin{packeditemize}
    \item \textbf{Non-intrusiveness}: The tracking should not damage the source dataset \textit{utility}.
    \item \textbf{Robustness}: The regeneration threat should be accurately identified as long as the model-training utility is preserved.
    
    \item \textbf{Efficiency}: The tracking algorithm is of low complexity and only requires a small proportion of audited datasets.
    
    \item \textbf{Extensibility}: The framework should easily incorporate newer tracking metrics and should work under a variety of access levels to $\advf$ and $\advdst$.

\end{packeditemize}

\subsubsection{Defender Assumptions}

To support varying access settings, we consider the weakest defense assumption (\ie, black-box access to $\advf$ and no access to $\advdst$) and provide more accurate tracking as the defender capacity improves (\ie, white-box access to $\advf$ and a subset of $\advdst$).
We clarify the defender capacity as follows:
\begin{packedenumerate}
    \item \textit{Access to the suspect model $\advf$}.
    We consider two possibilities: 1) black-box access.
    The defender can only query $f_\cA$ to obtain confidence scores.
    2) white-box access.
    The defender (\eg, platform) can acquire $\advf$ weights.

    \item \textit{Access to the suspect dataset $\advdst$}.
    Similarly, the defender has 1) no access to $\advdst$'s samples, and 2) white-box access to the suspect sub-dataset $\advdst^s$, which is assumed to be uniformly sampled from $\advdst$.

\end{packedenumerate}

\begin{algorithm}[t]
\small
    \caption{$\boxname(\vicf, \vicdst, \advf, \advdst^s, N_{audit}^M, N_{audit}^D, seed, \cM_\cA, R_\cJ)$}
    \label{alg:overview}
    \KwIn{Victim's model $\vicf$ and dataset $\vicdst$, suspect model $\advf$ and subset $\advdst^s$, audit set sizes $N_{audit}^M$ and $N_{audit}^D$, random seed $seed$, access mode $\cM_\cA$, and judge requirement $R_\cJ$}
    \KwOut{Judgment $\cJ$}

    \SetKwFunction{GenAuditSet}{GenerateAuditSet}
    \SetKwFunction{Register}{Register}
    \SetKwFunction{Similarity}{SimilarityChecking}
    \SetKwFunction{Judging}{Judging}

    \tcp{Audit set generation (\Cref{subsec:judge_set_gen})}
    $\auditset^M$, $\auditset^D$ $\leftarrow$ \GenAuditSet$(\vicf, \vicdst, N_{audit}^M, N_{audit}^D, seed)$

    \tcp{Similarity metrics (\Cref{subsec:sim_check})}

    $\mathbf{s}_\cJ$ $\leftarrow$ \Similarity$(\auditset^M, \auditset^D, \vicf, (\advf, \advdst^s), \cM_\cA)$ 

    \tcp{Judge derived dataset (\Cref{subsec:judgement})}
    
    $\cJ$ $\leftarrow$ \Judging$(\mathbf{s}_\cJ, R_\cJ)$

    \Return $\cJ$
\end{algorithm}

Note that white-box access to a subset $\advdst^s$ is realistic.
For instance, current open-source or commercial platforms often provide a free preview subset (\eg, D0 in \Cref{tab:measure-dataset}).
Moreover, platforms or other parties responsible for copyright and license compliance~\cite{googleplay_copyright,vpvet_ccs2024}, can, subject to sharer authorization or regulatory mandates, leverage privacy-preserving computing techniques (\eg, Trusted Execution Environments (TEEs)~\cite{sgx_privacy_2023}) to obtain partial or complete access to the dataset.

\subsubsection{Workflow of \boxname.}

As \Cref{fig:overview} shows, \boxname consists of three modules and four similarity metrics spanning sample-, set-, and distribution-level features.
\Cref{alg:overview} depicts the procedure of \boxname.

The \textit{Audit Set Creation} module generates two audit sets $\auditset^M$ and $\auditset^D$ that are used by model-access and dataset-access metrics, respectively (Line 1).
To ensure efficiency, the audit sets should be small-sized and represent the source dataset.

To ensure extensibility, the \textit{Similarity Calculation} module contains multiple metrics (Line 2) that depend on the defender's access to $(\advf, \advdst)$, denoted by access mode $\cM_\cA$.
For each metric supported under $\cM_\cA$, the defender computes a metric value (a distance; smaller indicates higher similarity) and then thresholds it in the next module.

Finally, the \textit{Regeneration Judgment} module accomplishes the judgment based on the comprehensive evaluation of requirements $R_\cJ$, which will be elaborated in \Cref{subsec:judgement}.
The decision is accompanied by per-metric scores, which help interpret which feature scale drives the decision.

\begin{figure}[t]
    \centering
    \includegraphics[width=0.9\linewidth]{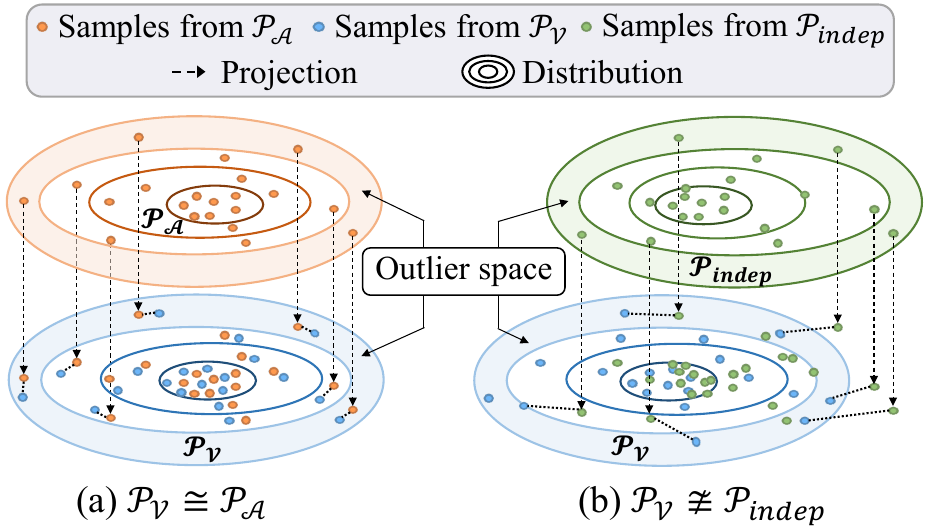}
    \caption{Intuition of our audit set generation.
    The distribution of a regenerated dataset is close to the victim's distribution ($\vicdistri\cong\advdistri$).
    On the other hand, for an independent dataset, the underlying distribution $\cP_{indep}$ has larger distance to the victim's distribution ($\vicdistri\ncong \cP_{indep}$).
    We select the outlier points (\ie, points on the shaded area) for further testing as they characterize the distribution contour.}
    \label{fig:illustration}
\end{figure}

\subsection{Audit Set Creation}
\label{subsec:judge_set_gen}

The defender first generates an audit set from $\vicdst$.
The audit set should be small to reduce computing cost and to well characterize the source dataset $\vicdst$ so that the derived datasets can be more accurately detected and independent datasets are less likely to be mistaken as a copy.
We consider two audit sets $\auditset^D$ and $\auditset^M$ used to calculate similarities for dataset metrics and model metrics, respectively.
The audit-set sizes are denoted by $N_{audit}^D$ and $N_{audit}^M$ for $\auditset^D$ and $\auditset^M$, respectively.
In \Cref{alg:overview}, we implement these steps as $\texttt{GenerateAuditSet}(\vicf,\vicdst,N_{audit}^M,N_{audit}^D,seed)$ that outputs $(\auditset^M,\auditset^D)$.
When the defender has access to the suspect sub-dataset $\advdst^s$, we uniformly sample from the victim dataset, \ie, $\auditset^D=\texttt{Uniform}(\vicdst, N_{audit}^D, seed)$, to better represent the distribution of $\vicdst$, where \texttt{Uniform} denotes a uniform sampler seeded with $seed$.

Considering the case with black-box or white-box access to the suspect model $\advf$, our insight is leveraging outliers to enlarge the difference between regenerated and independent datasets.
As illustrated in the \Cref{fig:illustration} (a), such outliers depict the dataset shape in the distribution space.
Hence, the regeneration effect on models can be amplified on the outliers, leading to more accurate detection.

For an independent dataset which follows a different $\cP_{indep}\ncong\vicdistri$, as shown in \Cref{fig:illustration} (b), the peripheral points can better characterize the distribution shift than the in-distribution points which are basically easy-to-fit samples.
Since the outliers are generally hard to fit, we sample $N_{audit}^M$ high-loss points as our audit set:
\begin{equation}
\begin{aligned}
    \auditset^M &= \texttt{Sample}(\{x|x\in\vicdst  \land  l(f_\cV(x)) \geq \tau_{audit}\}, N_{audit}^M),
\end{aligned}
\end{equation}
where $l$ is the loss function, $\tau_{audit}$ is the threshold of outlier range.

\subsection{Similarity Calculation}
\label{subsec:sim_check}

In this section, we present the four similarity metrics under different access modes.
As illustrated in \Cref{fig:overview}, the metrics are divided into two categories: metrics requiring model access and metrics requiring dataset access.
\Cref{tab:testing_metrics} presents the defender access and target for each metric.
If the defender does not have the required access for a metric, then \boxname skips this metric and proceeds with the rest.
Note that metrics designed for black-box model access can also be tested with white-box model access, and all metrics can be evaluated when the auditor has white-box access to the suspect model and partial white-box access to the suspect dataset.
The final judgment (\Cref{subsec:judgement}) is based on the holistic evaluation of metrics.

\begin{table}[t]
\centering
\caption{Summary of similarity metrics.}
\label{tab:testing_metrics}
\resizebox{0.9\linewidth}{!}{
\begin{tabular}{cccc}
\toprule

\multirow{2}{*}{\textbf{\begin{tabular}[c]{@{}c@{}}Similarity \\ Metric\end{tabular}}} &
  \multirow{2}{*}{\textbf{\begin{tabular}[c]{@{}c@{}}Defense \\ Target\end{tabular}}}&
  \multicolumn{2}{c}{\textbf{Defender Access}} \\ \cline{3-4} 
                  &            & $\cX_\cA$     & $f_\cA$     \\ \midrule
Output Distance (OD) &
  \multirow{3}{*}{\begin{tabular}[c]{@{}c@{}}Sample \& set features\end{tabular}} &
  / &
  \CIRCLE \\ \cline{1-1} \cline{3-4} 
Gradient Distance (GD) &            & / & \Circle \\ \cline{1-1} \cline{3-4} 
Sample Distance (SD)   &            & \LEFTcircle & /      \\ \hline
Distribution Distance (DD)      & Distribution features & \LEFTcircle & /      \\ \bottomrule
\end{tabular}
}
\begin{tablenotes}
\footnotesize
\item[] \CIRCLE: black-box; \Circle: white-box; \LEFTcircle: partial white-box; /: not involved.
\end{tablenotes}
\end{table}

\subsubsection{Metrics with Model Access}
Inspired by previous works~\cite{dataset_auditing_rl,maini2021dataset}, we found model memorization on training samples can reliably assess shallow dataset features.
Based on the access to the suspect model $\advf$, we propose two \textit{model-agnostic} metrics, Output Distance (OD), adapting from prior work~\cite{DBLP:conf/ndss/DongLCXZ023,dataset_auditing_rl} and our novel metric Gradient Distance (GD), respectively.

\bheading{Output Distance.}
The OD measures the distance of outputs on the audit set $\auditset^M$ between the victim's model $\vicf$ and the adversary's model $\advf$.
Since the audit set typically comprises hard-to-learn samples, a lower value indicates a higher likelihood that the adversarial model $\advf$ has also been trained on these audit samples.
We use $l_p$ norm to measure the distance between outputs:
\begin{equation}
    OD(\vicf, \advf, \auditset^M) = \frac{1}{\abs{\auditset^M}}\sum_{\mathbf{x}\in\auditset^M}\norm{\vicf(\mathbf{x})-\advf(\mathbf{x})}_p,
\end{equation}
where $\norm{\cdot}_p$ means the $l_p$ norm.
There are also other similar metrics such as Jensen-Shannon distance~\cite{DBLP:conf/isit/FugledeT04} in place of $l_p$ norm, but they are shown to have equivalent performance as OD~\cite{DBLP:conf/sp/ChenWPS0JM0S22}, thus we consider only $l_p$ and implement other metrics in future work.

\bheading{Gradient Distance.}
We propose GD for more detailed analysis with white-box access to $\advf$.
Empirically, for well-trained models, training data often exhibit lower loss, which correlates with smaller per-example gradients~\cite{maini2021dataset} under common losses (\Cref{fig:intuition_lng}).
We propose to compute the \textit{layer-wise normalized gradients}, $LNG(f, \mathbf{x})$, as a practical proxy signal, to quantify the self-influence of point $\mathbf{x}$ on model $f$:
\begin{equation}
    \mathrm{LNG}(f,x) \triangleq \frac{1}{N_f}\sum_{i=1}^{N_f}
\frac{g_i(f,x)}{\norm{\mathbf{w}_{f_i}}_p+\epsilon_{\text{num}}},
\end{equation}
where $N_f$ is the number of layers of $f$, $\mathbf{w}_{f_i}$ is the weight matrix of $i$-th layer and where $g_i(f,x) \triangleq \norm{\nabla_{\mathbf{w}_{f_i}}\,\ell\!\left(f(x),y\right)}_p$ and $\nabla_{\mathbf{w}_{f_i}}\ell(f(x),y)$ denotes the per-example gradient of the loss
with respect to the parameters of the $i$-th layer, evaluated at input $\mathbf{x}$.
Here, $y$ denotes the ground-truth label associated with $\mathbf{x}$.
The lower normalized gradient reflects the smaller impact of data $\mathbf{x}$ to the weight, and the layer-wise average represents the impact of $\mathbf{x}$ to the whole model.
Our novel metric GD is computed based on distance between $LNG$:

\begin{equation}
\resizebox{0.99\hsize}{!}{$
\begin{aligned}
    GD(\vicf, \advf, \auditset^M) = \frac{1}{\abs{\auditset^M}}\sum_{\mathbf{x}\in\auditset^M}|LNG(\vicf, \mathbf{x})
    -LNG(\advf, \mathbf{x})|.
\end{aligned}
$}
\end{equation}

\begin{figure}[t]
    \centering
    \includegraphics[width=0.95\linewidth]{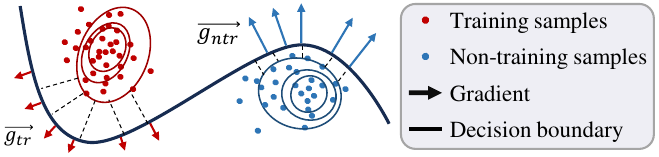}
    \caption{Intuition of GD: Training samples tend to induce smaller per-example gradients than non-training samples under converged training.}
    \label{fig:intuition_lng}
\end{figure}

\subsubsection{Metrics with Dataset Access}

With the suspect sub-dataset $\advdst^s$, the auditor can obtain more detailed testing results.
We incorporate two metrics in \boxname: Sample Distance (SD) and Distribution Distance (DD).

\bheading{Sample Distance.}
As demonstrated in 
\Cref{subsec:measure-method}, SD estimates the fraction of audit samples in $\auditset^D$ whose identical or similar copy also appears in $\advdst^s$, providing an approximation of actual proportion of duplicates.
As the adversary preserves the dataset quality, regenerated samples should have small distance to the original ones.
We use the closed $r$-ball $\cB_d(\mathbf{x}, r) = \{\mathbf{x}^\prime | d(\mathbf{x}^\prime, \mathbf{x})\leq r\}$ for each $\mathbf{x}\in\auditset^D$ to delineate the border of similar copies.
Formally:

\begin{equation}
    \resizebox{0.99\hsize}{!}{$SD(\auditset^D, \advdst^s, r) = \abs{\{\mathbf{x}|\mathbf{x}\in\auditset^D \land \cB_d(\mathbf{x},r)\cap\advdst^s \neq\emptyset \}}/\abs{\auditset^D}.$}
\end{equation}

\bheading{Distribution Distance.}
It is hard to directly measure distribution-level similarity, especially with only a small suspect subset.
To simultaneously satisfy the design goals, inspired by the notion of Maximum Mean Discrepancy (MMD)~\cite{DBLP:journals/jmlr/GrettonBRSS12,binkowski2018demystifying}, we propose to empirically estimate the distribution gap between $\auditset^D$ and $\advdst^s$ with:

\begin{equation}
\label{eq:dd}
\resizebox{0.99\hsize}{!}{$
    DD(\auditset^D, \advdst^s) = \expectedvalue_{\theta\sim \cP_\theta}\norm{\frac{1}{\abs{\auditset^D}} \sum_{i=1}^{\abs{\auditset^D}} \psi_\theta(\mathbf{x}_i) - \frac{1}{\abs{\advdst^s}} \sum_{j=1}^{\abs{\advdst^s}} \psi_\theta(\mathbf{x}_j^\prime)}_p,
    $}
\end{equation}
where $\psi_\theta$ is the feature extractor of parameter $\theta$, $\mathbf{x}_i\in\auditset^D$ and $\mathbf{x}_j^\prime\in\advdst^s$.
Appendix~\ref{appendix:additional_detail} presents the implementation details of $\psi_\theta$.

\subsection{Regeneration Judgment}
\label{subsec:judgement}

\begin{algorithm}[t]
\small
    \caption{$\texttt{Judging}(\mathbf{s}_\cJ, R_\cJ)$}
    \label{alg:judge}
    \KwIn{Metric values $\mathbf{s}_\cJ$ (smaller indicates higher similarity), and judge requirement $R_\cJ$}
    \KwOut{Judgment $\cJ$}
    \eIf{$\exists \lambda \in \{OD, GD\}  \text{ such that }  \mathbf{s}_\cJ[\lambda] \leq R_\cJ[\lambda]$}{
        \eIf{$SD$ is tested}{
        \Return Attack \ding{182} / \ding{183} with $\mathbf{s}_\cJ[SD]$.}{
            \Return Attack \ding{182} / \ding{183}.}
    }
    {
        \If{$\mathbf{s}_\cJ[DD] \leq R_\cJ[DD]$}{
            \Return Attack \ding{184}.}
    }

    \Return Independent.
\end{algorithm}

The judgment is based on the pre-computed threshold contained in the judge requirement $R_\cJ$ and the thresholding pattern among available metrics.
Inspired by one-sided hypothesis testing~\cite{DBLP:conf/sp/ChenWPS0JM0S22}, we calibrate each metric-specific threshold $\tau_\lambda$ using surrogate negative datasets $\{\cX_{neg}^i\}_i$ (independently constructed datasets).
For $\lambda\in\{OD,GD\}$, we train negative models $\{f_{neg}^i\}_i$ on $\{\cX_{neg}^i\}_i$ and compute $s_{neg}^i=\lambda(\vicf,f_{neg}^i,\auditset^M)$.
For $\lambda=DD$, we compute $s_{neg}^i=DD(\auditset^D,(\cX_{neg}^{i})^{s})$, where $(\cX_{neg}^{i})^{s}$ is a uniformly sampled audited subset of $\cX_{neg}^{i}$.
We then set $\tau_\lambda=\alpha_\lambda\cdot \min_i s_{neg}^i$, where $0<\alpha_\lambda\leq 1$ controls the sensitivity (larger $\alpha_\lambda$ is more permissive).
Since smaller distances indicate stronger similarity to $\vicdst$, using the closest negative score makes the threshold conservative for a small negative set.
The thresholding pattern consists of metrics whose values are lower than the corresponding thresholds (see \Cref{alg:judge}), serving as supporting evidence for the auditor to assess the regeneration attack implemented by the adversary.
When surrogate negative datasets are unavailable, the defender can alternatively set $\tau_\lambda=\alpha_\lambda\cdot \max_i s_{pos}^i$ based on potential pirated datasets $\{\cX_{pos}^i\}_i$, where $s_{pos}^i$ is computed analogously.
In this case, the threshold can also be determined via empirical strategies, \eg, simulating dataset modifications with surrogate datasets and selecting $\tau_\lambda$ to balance false positives and false negatives.
The computational cost mainly comes from surrogate model training, leading to a complexity of $O(N_f+N_g)$, where $N_f$ is the number of surrogate models and $N_g$ is the number of surrogate generators for simulating attacks.
We analyze threshold sensitivity in Appendix~\ref{app:fp_power}.

\subsection{Theoretical Analysis}
\label{sec:theory}
To provide a rigorous justification for \boxname, we analyze (i) why our similarity metrics are principled, and
(ii) the utility-evasion trade-off faced by a regeneration adversary. Detailed proofs are deferred to the \Cref{appendix:theory}.
We also analyze the final judgment as a statistical test in Appendix~\ref{app:fp_power}.

\subsubsection{Justification of Multi-Scale Similarity Metrics}
\label{sec:theory_metrics}
Our metrics capture distinct, fundamental properties of the learning pipeline.

OD is motivated by \emph{prediction stability}: under the utility constraint in Eq.~(\ref{eq:game}),
a successful regeneration should preserve generalization behavior, suggesting limited sensitivity to small data perturbations,
leading to close outputs on the audit set (Appendix~\ref{app:od}).
GD is motivated by \emph{influence-function} analysis: the self-influence of a sample is controlled by a quadratic form involving
its gradient and the inverse Hessian; GD serves as an efficient proxy for distinguishing seen-like samples from unseen-like ones
(Appendix~\ref{app:gd}).
DD estimates an MMD-inspired RKHS integral probability metric.
Under a characteristic kernel, the corresponding mean embedding is identifiable; in our implementation, we approximate this idea using finite random neural feature maps~\cite{DBLP:conf/icml/SaxeKCBSN11, DBLP:journals/tsp/GiryesSB16, DBLP:journals/corr/abs-2202-06438, DBLP:conf/wacv/ZhaoB23}, yielding a practical distribution-level similarity test (Appendix~\ref{app:dd}).

\subsubsection{Detectability via Domain Adaptation}
\label{sec:theory_limits}
The adversary faces a trade-off between evading detection and preserving utility (Eq.~\eqref{eq:game}).
We model this trade-off using domain adaptation theory.
Let $\advdistri$ denote the adversary's (regenerated) training distribution and $\vicdistri$ denote the victim's distribution on which utility is evaluated.
A standard generalization bound suggests that performing well on $\vicdistri$ depends on both (i) low risk on $\advdistri$
and (ii) small distributional divergence between $\advdistri$ and $\vicdistri$.

\begin{theorem}[Utility--Evasion Trade-off]
\label{thm:tradeoff}
Let $\mathcal{H}$ be a hypothesis class and consider the $0$--$1$ loss.
For any $h\in\mathcal{H}$,
\begin{equation}
R_{\vicdistri}(h)\;\le\; R_{\advdistri}(h)\;+\; \tfrac12\, d_{\mathcal{H}\Delta\mathcal{H}}(\advdistri,\vicdistri)\;+\;\lambda,
\label{eq:da_bound}
\end{equation}
where $\lambda=\min_{h\in\mathcal{H}}\big(R_{\advdistri}(h)+R_{\vicdistri}(h)\big)$  is the ideal joint error and $d_{\mathcal{H}\Delta\mathcal{H}}(\cdot,\cdot)$ is the $\mathcal{H}\Delta\mathcal{H}$-divergence.

\end{theorem}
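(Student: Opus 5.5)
This is the Ben-David et al.\ domain adaptation bound, and I will prove it by the standard triangle-inequality argument for the $0$--$1$ disagreement. For distributions $\cP$ on inputs and hypotheses $h,h'$, write $R_{\cP}(h,h')=\Pr_{\bx\sim\cP}[h(\bx)\neq h'(\bx)]$. For the $0$--$1$ loss this disagreement satisfies the triangle inequality: $R_{\cP}(h)\le R_{\cP}(h,h^*)+R_{\cP}(h^*)$, and likewise $R_{\cP}(h,h^*)\le R_{\cP}(h)+R_{\cP}(h^*)$. Both follow from the pointwise inequality $\mathbb{1}[h\neq y]\le \mathbb{1}[h\neq h^*]+\mathbb{1}[h^*\neq y]$, taken in expectation. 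I fix $h^*\in\arg\min_{h'\in\mathcal{H}}\big(R_{\advdistri}(h')+R_{\vicdistri}(h')\big)$, so that $R_{\advdistri}(h^*)+R_{\vicdistri}(h^*)=\lambda$. If the minimum is not attained, I take an approximate minimizer and let the slack go to zero.

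The chain is then:
\begin{align*}
R_{\vicdistri}(h) &\le R_{\vicdistri}(h^*) + R_{\vicdistri}(h,h^*)\\
&\le R_{\vicdistri}(h^*) + R_{\advdistri}(h,h^*) + \big|R_{\vicdistri}(h,h^*)-R_{\advdistri}(h,h^*)\big|\\
&\le R_{\vicdistri}(h^*) + R_{\advdistri}(h) + R_{\advdistri}(h^*) + \big|R_{\vicdistri}(h,h^*)-R_{\advdistri}(h,h^*)\big|.
\end{align*}
The first and third terms together equal $\lambda$.

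The remaining step is to bound the disagreement gap by $\tfrac12 d_{\mathcal{H}\Delta\mathcal{H}}(\advdistri,\vicdistri)$. The convention matters here, so I will state it explicitly: $d_{\mathcal{H}\Delta\mathcal{H}}(\cP,\cP')=2\sup_{h,h'\in\mathcal{H}}\big|\Pr_{\cP}[h\neq h']-\Pr_{\cP'}[h\neq h']\big|$. This equals $2\sup_{A\in\mathcal{A}_{\mathcal{H}\Delta\mathcal{H}}}|\cP(A)-\cP'(A)|$, where $\mathcal{A}_{\mathcal{H}\Delta\mathcal{H}}$ is the class of disagreement sets $\{\bx: h(\bx)\neq h'(\bx)\}$. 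Since $h,h^*\in\mathcal{H}$, the set $\{h\neq h^*\}$ belongs to this class, so the gap is at most half the divergence.

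The only real obstacle is this normalization convention, and it is definitional rather than mathematical. I will state the factor-of-2 definition explicitly so the $\tfrac12$ in \eqref{eq:da_bound} is exact. I also need to note that the marginal distributions on $\bx$ are what enter the divergence, while labels enter only through $\lambda$.
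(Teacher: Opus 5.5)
Your proof is correct and takes essentially the same route as the paper: the appendix simply invokes the Ben-David et al.\ bound with source domain $\advdistri$ and target domain $\vicdistri$, using the same factor-of-two definition of $d_{\mathcal{H}\Delta\mathcal{H}}$ that you adopt, so the $\tfrac12$ in the bound comes out exactly. The only difference is that you write out the standard triangle-inequality argument behind the cited result, which also holds as written for multiclass or stochastic labels, whereas the paper uses that result as a black box.
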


Theorem~\ref{thm:tradeoff} highlights a generic trade-off between preserving utility on $\vicdistri$ and increasing
cross-domain discrepancy between $\advdistri$ and $\vicdistri$.
Our DD metric instantiates a practical, distribution-level
MMD-inspired discrepancy that is well-suited for detecting generative polishing, while the bound uses $d_{\mathcal{H}\Delta\mathcal{H}}$ to capture disagreement-based domain shift.
We treat these two notions as complementary rather than equivalent; empirically, increasing distribution-level deviation to evade DD tends to make maintaining high utility on $\vicdistri$ harder.
\begin{table*}[ht]
\centering
\caption{Details of benchmark datasets and experimental setup. For each task, we adopt one widely used training dataset as the victim's dataset $\vicdst$, and select three most similar public datasets used in the prior literature as the independent (negative) ones.}
\label{tab:dataset}
\resizebox{\linewidth}{!}{
\begin{tabular}{cccccccccc}
\toprule
\textbf{Task} & \textbf{\begin{tabular}[c]{@{}c@{}}Victim's \\ Dataset\end{tabular}} & \textbf{Domain} & \textbf{\begin{tabular}[c]{@{}c@{}}Input\\Size\end{tabular}} & \textbf{\# of Classes} & \textbf{\begin{tabular}[c]{@{}c@{}}Victim's \\ Architecture\end{tabular}} & \textbf{\begin{tabular}[c]{@{}c@{}}Additional \\ Architecture for $\advf$\end{tabular}}&\textbf{\begin{tabular}[c]{@{}c@{}}Post-processing in\\Attack \ding{182}\end{tabular}}                                                                    & \textbf{\begin{tabular}[c]{@{}c@{}} Generators for \\Attack \ding{184}\end{tabular}} & \textbf{\begin{tabular}[c]{@{}c@{}}Independent\\ Datasets\end{tabular}} \\ \midrule
\multicolumn{1}{c}{\begin{tabular}[c]{@{}l@{}}Handwritten Digit\\Classification\end{tabular}}
  & MNIST & Image & 32$\times$32 & 10 & LeNet-5 & 3-layer MLP & \multirow{3}{*}{\begin{tabular}[c]{@{}c@{}}Glass blur,\\ Impulse noise, \\ JPEG compression, \\Pixelate, \\Spatter, \\ Speckle noise\end{tabular}} & Conditional DCGAN & \begin{tabular}[c]{@{}c@{}}
  DIDA~\cite{Kusetogullari2020DIDA} (Neg-1), \\ ARDIS~\cite{DBLP:journals/nca/KusetogullariYC20}
   (Neg-2),\\ SVHN~\cite{netzer2011reading} (Neg-3)\end{tabular}   \\ \cline{1-7} \cline{9-10} 
\multicolumn{1}{c}{\begin{tabular}[c]{@{}l@{}}Object\\Classification\end{tabular}} & CIFAR-10 & Image & 32$\times$32 & 10 & ResNet-18~\cite{he2016deep} & \begin{tabular}[c]{@{}l@{}}MobileNet-v2~\cite{DBLP:conf/cvpr/SandlerHZZC18}, \\VGG-13~\cite{vgg}\end{tabular} & & EDM~\cite{DBLP:conf/nips/KarrasAAL22} & \begin{tabular}[c]{@{}c@{}}STL-10~\cite{coates2011analysis} (Neg-1), \\ ImageNet-Set1 (Neg-2), \\ ImageNet-Set2 (Neg-3) \end{tabular}\\ \cline{1-7} \cline{9-10} 
\multicolumn{1}{c}{\begin{tabular}[c]{@{}l@{}}Facial Attribute \\Classification\end{tabular}}& FairFace & Image & 128$\times$128 & 20 & ResNet-101 &  \multicolumn{1}{c}{\begin{tabular}[c]{@{}l@{}}EfficientNet-v2-s~\cite{DBLP:conf/icml/TanL21}, \\RegNet-y-8gf~\cite{regnet_architecture}\end{tabular}} & & StyleGAN3~\cite{DBLP:conf/nips/KarrasALHHLA21} & \begin{tabular}[c]{@{}c@{}} LFWA+~\cite{liu2015faceattributes} (Neg-1), \\PubFig~\cite{DBLP:conf/iccv/KumarBBN09} (Neg-2),\\UTKFace~\cite{utkface_dataset} (Neg-3)\end{tabular} \\ \hline
\multicolumn{1}{c}{\begin{tabular}[c]{@{}l@{}}News \\Classification\end{tabular}}  & AG-News & Text & 768 & 4 & Tiny-BERT~\cite{tinymini-bert} & \begin{tabular}[c]{@{}c@{}}Mini-BERT,\\Small-BERT~\cite{bhargava2021generalization}\end{tabular} & \begin{tabular}[c]{@{}c@{}}Word Addition, \\Word Deletion, \\ Synonym Swap\end{tabular} & \begin{tabular}[c]{@{}c@{}}ChatGPT-enhanced\\ T5 paraphraser~\cite{chatgpt_t5_paraphraser}\end{tabular} & \begin{tabular}[c]{@{}c@{}}Yahoo-News~\cite{DBLP:conf/emnlp/YangXWL19} (Neg-1),\\
CNN-News~\cite{cnn_dataset} (Neg-2),\\ Guardian News (Guard-News)~\cite{guardian_dataset} (Neg-3)\end{tabular} \\ \bottomrule
\end{tabular}}

\end{table*}

\begin{table*}[t]
\centering
\caption{Similarity measurement with our proposed metrics on regenerated datasets produced by the sample-level attack and on the independent datasets for vision tasks.}
\label{tab:a1mod_all_distance}
\resizebox{0.99\textwidth}{!}{%
\begin{tabular}{ccccccccccccccccc}
\toprule
\multicolumn{2}{c}{\multirow{2}{*}{\textbf{\begin{tabular}[c]{@{}c@{}}Dataset Type\end{tabular}}}}                          & \multicolumn{5}{c}{\textbf{MNIST}}                                                                                          & \multicolumn{5}{c}{\textbf{CIFAR-10}}                                                                                      & \multicolumn{5}{c}{\textbf{FairFace}}                                                                                         \\ \cline{3-7}\cline{8-17} 
\multicolumn{2}{c}{}                                                                                                               & \multicolumn{1}{c}{UR (\%)} & \multicolumn{1}{c}{OD}    & \multicolumn{1}{c}{GD}    & \multicolumn{1}{c}{SD}    & DD     & \multicolumn{1}{c}{UR (\%)} & \multicolumn{1}{c}{OD}    & \multicolumn{1}{c}{GD}    & \multicolumn{1}{c}{SD}    & DD    & \multicolumn{1}{c}{UR (\%)} & \multicolumn{1}{c}{OD}    & \multicolumn{1}{c}{GD}    & \multicolumn{1}{c}{SD}    & DD     \\ \midrule
\multicolumn{1}{c}{\multirow{6}{*}{\begin{tabular}[c]{@{}c@{}}Suspect\\ Dataset\\ (Positive)\end{tabular}}}     & Glass blur       & \multicolumn{1}{c}{73.77}   & \multicolumn{1}{c}{0.155} & \multicolumn{1}{c}{0.338} & \multicolumn{1}{c}{0.0}   & 7.988  & \multicolumn{1}{c}{90.17}   & \multicolumn{1}{c}{0.235} & \multicolumn{1}{c}{0.438} & \multicolumn{1}{c}{0.254} & 1.336 & \multicolumn{1}{c}{84.13}   & \multicolumn{1}{c}{0.032} & \multicolumn{1}{c}{0.158} & \multicolumn{1}{c}{0.946} & 2.485  \\ \cline{2-17} 
\multicolumn{1}{c}{}                                                                                            & Impulse noise    & \multicolumn{1}{c}{98.60}   & \multicolumn{1}{c}{0.066} & \multicolumn{1}{c}{0.148} & \multicolumn{1}{c}{0.0}   & 2.889  & \multicolumn{1}{c}{92.36}   & \multicolumn{1}{c}{0.197} & \multicolumn{1}{c}{0.421} & \multicolumn{1}{c}{0.0}   & 1.427 & \multicolumn{1}{c}{82.67}   & \multicolumn{1}{c}{0.075} & \multicolumn{1}{c}{0.374} & \multicolumn{1}{c}{0.0}   & 3.200  \\ \cline{2-17} 
\multicolumn{1}{c}{}                                                                                            & JPEG compression & \multicolumn{1}{c}{98.34}   & \multicolumn{1}{c}{0.041} & \multicolumn{1}{c}{0.122} & \multicolumn{1}{c}{1.0}   & 1.453  & \multicolumn{1}{c}{90.78}   & \multicolumn{1}{c}{0.209} & \multicolumn{1}{c}{0.383} & \multicolumn{1}{c}{0.988} & 1.262 & \multicolumn{1}{c}{85.25}   & \multicolumn{1}{c}{0.044} & \multicolumn{1}{c}{0.228} & \multicolumn{1}{c}{1.0}   & 2.348  \\ \cline{2-17} 
\multicolumn{1}{c}{}                                                                                            & Pixelate         & \multicolumn{1}{c}{98.28}   & \multicolumn{1}{c}{0.068} & \multicolumn{1}{c}{0.171} & \multicolumn{1}{c}{0.002} & 3.241  & \multicolumn{1}{c}{89.96}   & \multicolumn{1}{c}{0.296} & \multicolumn{1}{c}{0.691} & \multicolumn{1}{c}{0.988} & 1.269 & \multicolumn{1}{c}{87.46}   & \multicolumn{1}{c}{0.027} & \multicolumn{1}{c}{0.117} & \multicolumn{1}{c}{1.0}   & 2.393  \\ \cline{2-17} 
\multicolumn{1}{c}{}                                                                                            & Spatter          & \multicolumn{1}{c}{96.97}   & \multicolumn{1}{c}{0.070} & \multicolumn{1}{c}{0.155} & \multicolumn{1}{c}{0.008} & 3.479  & \multicolumn{1}{c}{92.58}   & \multicolumn{1}{c}{0.240} & \multicolumn{1}{c}{0.497} & \multicolumn{1}{c}{0.129} & 2.693 & \multicolumn{1}{c}{82.22}   & \multicolumn{1}{c}{0.056} & \multicolumn{1}{c}{0.284} & \multicolumn{1}{c}{0.006} & 4.272  \\ \cline{2-17} 
\multicolumn{1}{c}{}                                                                                            & Speckle noise    & \multicolumn{1}{c}{98.73}   & \multicolumn{1}{c}{0.033} & \multicolumn{1}{c}{0.106} & \multicolumn{1}{c}{0.526} & 2.374  & \multicolumn{1}{c}{91.08}   & \multicolumn{1}{c}{0.259} & \multicolumn{1}{c}{0.508} & \multicolumn{1}{c}{0.597} & 1.321 & \multicolumn{1}{c}{76.27}   & \multicolumn{1}{c}{0.061} & \multicolumn{1}{c}{0.305} & \multicolumn{1}{c}{0.068} & 2.768  \\ \hline
\multicolumn{1}{c}{\multirow{3}{*}{\begin{tabular}[c]{@{}c@{}}Independent\\ Dataset\\ (Negative)\end{tabular}}} & Neg-1            & \multicolumn{1}{c}{13.48}   & \multicolumn{1}{c}{1.254} & \multicolumn{1}{c}{4.073} & \multicolumn{1}{c}{0.0}   & 35.927 & \multicolumn{1}{c}{32.19}   & \multicolumn{1}{c}{1.053} & \multicolumn{1}{c}{5.951} & \multicolumn{1}{c}{0.0}   & 4.586 & \multicolumn{1}{c}{15.23}   & \multicolumn{1}{c}{1.122} & \multicolumn{1}{c}{6.256} & \multicolumn{1}{c}{0.000} & 9.441  \\ \cline{2-17} 
\multicolumn{1}{c}{}                                                                                            & Neg-2            & \multicolumn{1}{c}{18.10}   & \multicolumn{1}{c}{1.309} & \multicolumn{1}{c}{4.693} & \multicolumn{1}{c}{0.0}   & 33.181 & \multicolumn{1}{c}{77.16}   & \multicolumn{1}{c}{0.467} & \multicolumn{1}{c}{4.944} & \multicolumn{1}{c}{0.009} & 1.652 & \multicolumn{1}{c}{22.05}   & \multicolumn{1}{c}{1.098} & \multicolumn{1}{c}{6.018} & \multicolumn{1}{c}{0.000} & 13.258 \\ \cline{2-17} 
\multicolumn{1}{c}{}                                                                                            & Neg-3            & \multicolumn{1}{c}{63.78}   & \multicolumn{1}{c}{0.339} & \multicolumn{1}{c}{0.901} & \multicolumn{1}{c}{0.0}   & 26.737 & \multicolumn{1}{c}{82.74}   & \multicolumn{1}{c}{0.451} & \multicolumn{1}{c}{4.860} & \multicolumn{1}{c}{0.008} & 1.870 & \multicolumn{1}{c}{58.60}   & \multicolumn{1}{c}{0.995} & \multicolumn{1}{c}{6.598} & \multicolumn{1}{c}{0.001} & 13.589 \\ \bottomrule
\end{tabular}
}

\end{table*}

\begin{table}[t]
\centering
\caption{Similarity measurement with our metrics in the news classification task.}
\label{tab:a1mod_all_distance_agnews}
\resizebox{\linewidth}{!}{%
\begin{tabular}{ccccccc}
\toprule
\multicolumn{2}{c}{\textbf{Modification Type}}                                                                              & UR (\%) & OD    & GD     & SD    & DD     \\ \midrule
\multicolumn{1}{c}{\multirow{3}{*}{\begin{tabular}[c]{@{}c@{}}Suspect\\ Dataset\\ (Positive)\end{tabular}}}     & Insertion & 99.98   & 0.106 & 2.967  & 0.002 & 4.789  \\ \cline{2-7} 
\multicolumn{1}{c}{}                                                                                            & Deletion  & 99.31   & 0.137 & 5.123  & 0.084   & 5.952  \\ \cline{2-7} 
\multicolumn{1}{c}{}                                                                                            & Synonym   & 99.48   & 0.160 & 3.493  & 0.002   & 7.104  \\ \hline
\multicolumn{1}{c}{\multirow{3}{*}{\begin{tabular}[c]{@{}c@{}}Independent\\ Dataset\\ (Negative)\end{tabular}}} & Neg-1     & 47.95  & 1.070 & 15.786 & 0.0   & 11.294 \\ \cline{2-7} 
\multicolumn{1}{c}{}                                                                                            & Neg-2     & 60.83  & 1.069 & 19.709 & 0.0   & 13.913 \\ \cline{2-7} 
\multicolumn{1}{c}{}                                                                                            & Neg-3     & 80.78  & 0.600 & 15.157 & 0.0   & 13.165 \\ \bottomrule
\end{tabular}
}

\end{table}

\section{Evaluation}
\label{sec:evaluation}

In this section, we first present the experimental settings in \Cref{subsec:expsetup}.
Then, in \Cref{subsec:exp_defend_attack1,subsec:exp_defend_attack2,subsec:exp_defend_attack3}, we evaluate the effectiveness of \boxname under different defender access assumptions against Attack~\ding{182}, Attack~\ding{183}, and Attack~\ding{184}, respectively.
We report OD, GD, and DD values to represent the defender's black-box model access, white-box model access, and white-box access to a partial suspect dataset, respectively.
Next, we compare the True Positive Rate (TPR) of \boxname with previous baselines in \Cref{subsec:exp_compare} and analyze potential false positives in Appendix~\ref{app:fp_power}.
Finally, we study adaptive attacks in \Cref{subsec:exp_robust_potential_attack}.

\subsection{Datasets \& Setup}
\label{subsec:expsetup}
We use widely adopted AI training datasets (\eg, CIFAR-10) for our evaluation.
For each task, the victim holds a training dataset as $\vicdst$ and trains models achieving high accuracy on corresponding test data (simulated as real-world data of distribution $\cP_t$).
The adversary only has access to independent (negative) datasets for the same task, whose distributions are similar but not identical to $\vicdst$.
For text data, we use BERT \texttt{[CLS]} embeddings so that inputs are represented as vectors, analogous to image features.

\bheading{Regenerated Datasets.}
We craft 320 regenerated datasets in total.
For the Attack \ding{182}, we first test the 6 post-processing attacks as listed in \Cref{tab:dataset} (see \Cref{fig:a1_visualize} \Cref{fig:example_agnews} for examples), and fix the post-processing severity to the low level.
We then increase the severity to test the robustness in \Cref{subsec:exp_defend_attack1} and extend to 20 different post-processing attack variants in \Cref{tab:comparison}.
For the set-level attack (Attack \ding{183}), we consider the adversary randomly mixing a subset of the victim dataset $\vicdst$ with one independent dataset to form $\advdst$ with overlap ratio $p_\cV\in\{0.2, 0.4, 0.6, 0.8\}$ where $p_\cV=\abs{\vicdst\cap\advdst} / \abs{\vicdst}$.
For the distribution-level attack (Attack \ding{184}), we consider both generative adversarial networks (GANs) and diffusion models in our experiments as they represent common generation strategies.
We use the conditional GAN for MNIST and one of the state-of-the-art diffusion models EDM~\cite{DBLP:conf/nips/KarrasAAL22} for CIFAR-10.
For FairFace, we train the GAN model StyleGAN3~\cite{DBLP:conf/nips/KarrasALHHLA21}.
For text data, we use a ChatGPT-enhanced T5 paraphraser~\cite{chatgpt_t5_paraphraser}.

\bheading{Negative Datasets.}
Prior work~\cite{maini2021dataset,huang2022dataset,DBLP:conf/ndss/DongLCXZ023,datause_ccs2024} evaluates on negative datasets from the same distribution as the victim dataset, which, although valid for tracking sample use, contradicts our attack assumption: an adversary does not need to regenerate the dataset if they already have access to data from the same distribution.

To the best of our knowledge, there is \emph{no} public benchmark to evaluate dataset regeneration.
Therefore, we build a benchmark with research-permissive open-source datasets.
We select negative datasets that are different from the victim dataset but address the same task to induce distributional discrepancy; however, \emph{few} public datasets provide an exact class match.
To ensure broad coverage, we surveyed 26 public datasets reported in prior work and, after manually harmonizing labels, selected the three most similar datasets as negatives (see \Cref{tab:dataset}).

Note that the rarity of negative datasets with similar data distributions supports our assumption that attackers typically cannot obtain a dataset with the same distribution.
Nevertheless, there are honest but coincident cases.
For CIFAR-10, due to the limited choices of negative datasets, we sample negative datasets from ImageNet, which shares a common source (WordNet) with CIFAR-10.

\bheading{Audit Set.}
In terms of the audit set $\auditset^M$ (\Cref{subsec:judge_set_gen}), we set $\tau_{audit}=0.2$ and $N_{audit}^M=1000$, and sample $\auditset^M$ uniformly at random from the high-loss set induced by $\tau_{audit}$.
To compute SD and DD, we set $N_{audit}^D = 0.01\cdot|\vicdst|$ and sample $\auditset^D$ uniformly from $\vicdst$, and we set $|\advdst^s| = 0.01\cdot|\advdst|$ by uniformly sampling a fixed subset from $\advdst$; both subsets remain unchanged throughout the experiments.
We approximate the expectation in DD (Eq.~\ref{eq:dd}) by averaging over 1000 random draws of $\psi_\theta$.
Each draw corresponds to sampling the parameters $\theta$ of a two-layer MLP from $\mathcal{N}(0,1)$ as a universal feature map $\psi_\theta$~\cite{DBLP:conf/icml/SaxeKCBSN11, DBLP:journals/tsp/GiryesSB16, DBLP:journals/corr/abs-2202-06438, DBLP:conf/wacv/ZhaoB23} (Appendix~\ref{appendix:additional_detail}).

\bheading{Models.}
In total, we trained 590 classification and generative models.
We choose one widely studied architecture for the victim model $\vicf$ for each task.
In addition to the same architecture as the victim, the adversary chooses more advanced architectures as listed in \Cref{tab:dataset} to make up for the dataset utility degradation.
The default training setting is in \Cref{appendix:additional_detail}.
We further evaluate \boxname's robustness against different training strategies by altering training epochs, architecture, and attack severity.

\bheading{Utility Metrics.}
We quantify the adversary's gain by the Utility Retention (UR), reported as a percentage: $UR = 100\times v(\advf) / v(\vicf)$, where $v$ is the test accuracy.
Take CIFAR-10 as an example, $v(\vicf)$ represents the test accuracy on 10,000 test samples of model $\vicf$ trained on 50,000 training samples.
A UR closer to 100\% indicates greater utility retention.

\subsection{Defending Sample-level Attack}
\label{subsec:exp_defend_attack1}
We first evaluate the effectiveness of \boxname for the sample-level attack (Attack \ding{182}) with the default training setting.
Then, we investigate the robustness against various training settings of $\advf$ and the post-processing severity. 
We mainly report OD, GD and DD results for the defender with access to the black-box suspect model, white-box suspect model and partial suspect dataset.

\bheading{Effectiveness.}
In \Cref{tab:a1mod_all_distance}, we present the metric values on datasets regenerated by Attack \ding{182} and on the negative datasets for the vision data.
\Cref{tab:a1mod_all_distance_agnews} shows corresponding results for the text data, where the text embeddings produced by $\vicf$ are compared. 
We also provide the UR to assess utility loss during regeneration.

\iheading{Judging Results.}
Under low attack severity, \boxname can consistently flag datasets regenerated under the sample-level attack (Attack~\ding{182}) as shallow-feature regeneration (Attack~\ding{182}/\ding{183}) following \Cref{alg:judge}.
We observe that OD, GD and DD have lower values for datasets regenerated by our tested post-processing approaches.
Besides, their metric values on negative datasets are all higher than on the positive cases and the determined threshold.
Consequently, the auditor can identify these cases as shallow-feature regeneration following \Cref{alg:judge} under any defender access mode in \Cref{tab:testing_metrics}.

\iheading{Unequal Effect of Post-processing Methods.}
We note that the same post-processing approach can produce unequal impact across different datasets or models.
For example, MNIST data are more sensitive to the ``Glass blur'' than the other two vision tasks (CIFAR-10 and FairFace) because of larger accuracy degradation.
In addition, facial attribute data are more sensitive across post-processing methods.

\iheading{Utility Gap of Negative Datasets.}
The negative datasets have lower UR than datasets regenerated under Attack~\ding{182} because of the inherent data distribution discrepancy.
For example, the negative datasets of the facial attribute classification have less balanced data, which hinders the model from learning facial features equally among races and genders.
This also implies that, in case of exclusive data source, a dataset of almost equal model-training utility to the proprietary dataset is more suspicious for unauthorized copying under the exclusive-source assumption.

\iheading{Results on Texts.}
We have similar observations for text data (\Cref{tab:a1mod_all_distance_agnews}): the models trained on datasets that undergo random insertion, synonym swap, and deletion for 30\% words can achieve near-perfect UR on test data.
We note that the URs of derived datasets are much higher and close to 100\%.
This signifies that the text embedding models are not fully influenced by individual words but also the semantic meaning of input texts.

\iheading{Evaluation of SD.}
It is worth noting that SD metric generates unstable values for the sample-level regeneration attack.
Ideally, nearly all samples are copied from the original dataset with additional modification, thus SD should be close to 1. 
However, the value of SD varies from 0 to 1.
For instance, the SD is significantly lower than other post-processing methods for CIFAR-10 post-processed by ``Impulse noise'' and is nearly 0 for the facial dataset post-processed by ``Spatter''.
The reason is that we choose a conservative radius for SD and a single choice of radius cannot fully characterize closeness between samples.
Higher radius should be better for capturing the post-processed samples but can cause more false positives.
Another potential reason is that we adopt classic $L_p$ norm to compare sample closeness, which is computationally efficient but does not capture the visual similarity.  
We will investigate the more robust metrics (\eg, LPIPS~\cite{DBLP:conf/cvpr/ZhangIESW18}) in the future work.

\bheading{Robustness.}
\Cref{fig:a1_rosbustness} also validates the robustness of our framework when the adversary adopts different settings to train their model $\advf$, and post-processing methods of medium and high severity.
For OD and GD, we investigate the robustness against the adversary's different training settings.
As expected, more training epochs (More Epochs) or more advanced model architecture (Diff. Archi.) result in better generalization over the modified samples and thus higher UR; however, these settings do not enable evasion and the regenerated datasets remain detectable under our previously determined thresholds.
Besides, we also investigate when the adversary increases the modification severity (Higher Sev.): amplifying the post-processing to medium and high level.

\begin{figure}[t]
    \centering
    \includegraphics[width=0.90\linewidth]{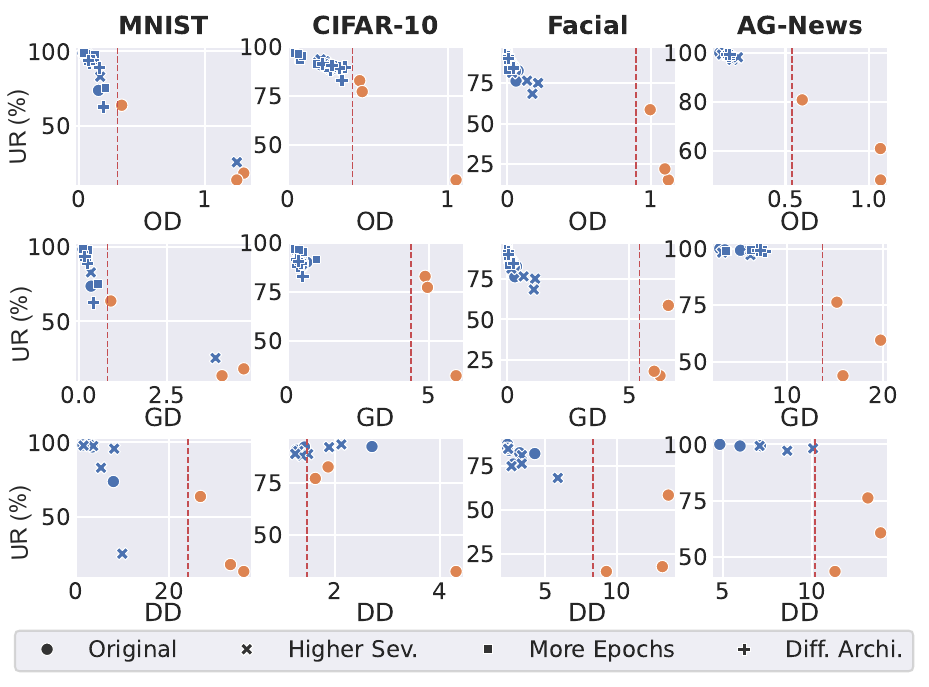}
    \caption{Robustness evaluation for defending the Attack~\ding{182}.
    We show the trade-off between UR and distances under various attack settings, where each point represents a model. The vertical dashed line is the threshold.}
    \label{fig:a1_rosbustness}
\end{figure}

\iheading{Trade-off between UR and Similarity.}
From the results of vision (\Cref{tab:a1mod_all_distance}) and text data (\Cref{tab:a1mod_all_distance_agnews}), we notice the trade-off between UR and the value of OD, GD and DD: dissimilar datasets result in lower UR and higher metric values, and datasets regenerated by the sample-level attack have higher UR and lower metric values.
Note that the trade-off is not linear: some negative datasets can achieve simultaneously higher UR and lower testing metric value than another negative dataset.
However, they are still separable from the regenerated datasets. 

In \Cref{fig:a1_rosbustness}, we visualize the trade-off under more complicated training and post-processing settings. 
The red vertical line represents the decision threshold determined previously. 
We observe that there are clear intervals between positive and negative datasets for OD and GD, which ensures the accurate judgment by \Cref{alg:judge}.
As we consider the sample-level attack, the inherent data distribution can be shifted from the original distribution, leading to a DD value higher than the decision threshold.
In summary, we observe the trade-off between the UR and our metrics: the adversary cannot simultaneously achieve high UR and evade our tracking by \Cref{alg:judge} under different attack settings.

\iheading{Case Study: CIFAR-10 and ImageNet.}
In \Cref{fig:a1_rosbustness}, we observe two negative datasets for CIFAR-10 are closer to the positive datasets than other tasks.
After manual checking, they are sampled from ImageNet.
The reason is both CIFAR-10 and ImageNet are sampled from the same source (\ie, searching on Internet using hierarchical structure of WordNet during close periods).
Therefore, they are more similar than other tasks.

\begin{tcolorbox}[notitle, boxrule=0.5pt,left=0.05cm, right=0.05cm, top=0.05cm, bottom=0.05cm]
\textbf{Takeaway 1:}
\boxname can effectively detect datasets regenerated under Attack~\ding{182} on both vision and text data and is robust against various sample-level attack settings.
\end{tcolorbox}

\subsection{Defending Set-level Attack}\label{subsec:exp_defend_attack2}
In this section, we first study the effectiveness against set-level regeneration under overlap ratios $p_\cV\in\{0.2, 0.4, 0.6, 0.8\}$, then show robustness under different training strategies for $f_\cA$.

\begin{figure}[t]
    \centering
    \includegraphics[width=0.99\linewidth]{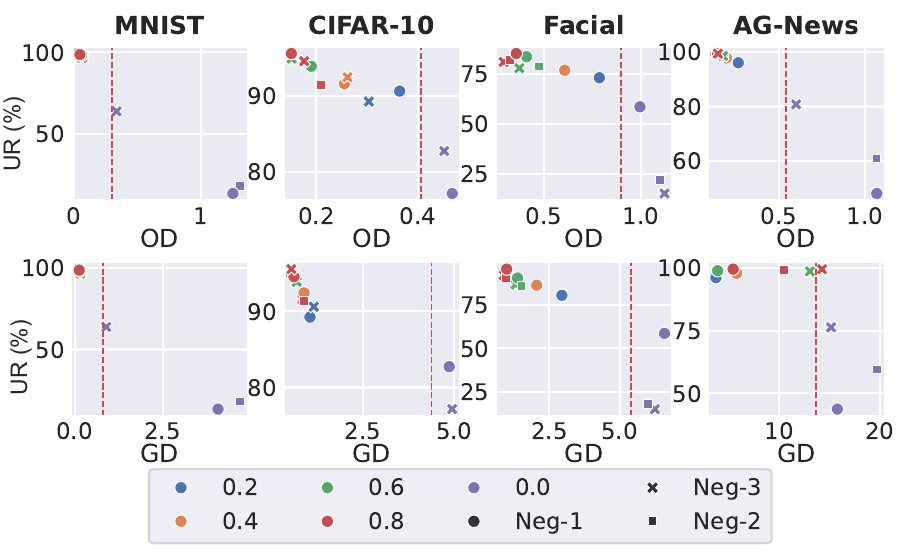}
    \caption{
    Trade-off between the measured distances (OD and GD) and the UR for intersection ratios $0.2, 0.4, 0.6, 0.8$ and the negative ones (\ie, ratio $0.0$), where each point represents a model.
    The vertical dashed line is the threshold.
    }
    \label{fig:a2_trade_off_odgd}
\end{figure}

\bheading{Effectiveness.}
We investigate the model-level metrics (OD and GD) in \Cref{fig:a2_trade_off_odgd} under defender's black-box and white-box access to the suspect model, and the dataset-level metrics (SD and DD) in \Cref{fig:a2_sddd} when the defender can access partial suspect dataset.

\iheading{Evaluation of OD and GD.}
\Cref{fig:a2_trade_off_odgd} presents the UR-metric trade-off for OD and GD on our tested severity.
We show the test intersection ratios in different colors and use the point style (\eg, circle or square) to represent the mixed independent datasets.
As a higher similarity means more mixed instances in the suspect dataset, the OD and GD have smaller value as the intersection ratio increases due to sample-level closeness.
Note that we do not aim to estimate the exact similarity but only to judge whether the suspect dataset contains a significant proportion of victim samples.
Additionally, we note that the GDs measured on models finetuned on AG-News are close to or higher than the threshold.
We provide the in-depth analysis along with the robustness.
However, we observe that in general the trade-off between UR and our metrics (OD and GD) holds, and that the adversary cannot craft a mixed dataset under Attack~\ding{183} without sacrificing the utility.

\begin{figure}[t]
    \centering
    \includegraphics[width=0.95\linewidth]{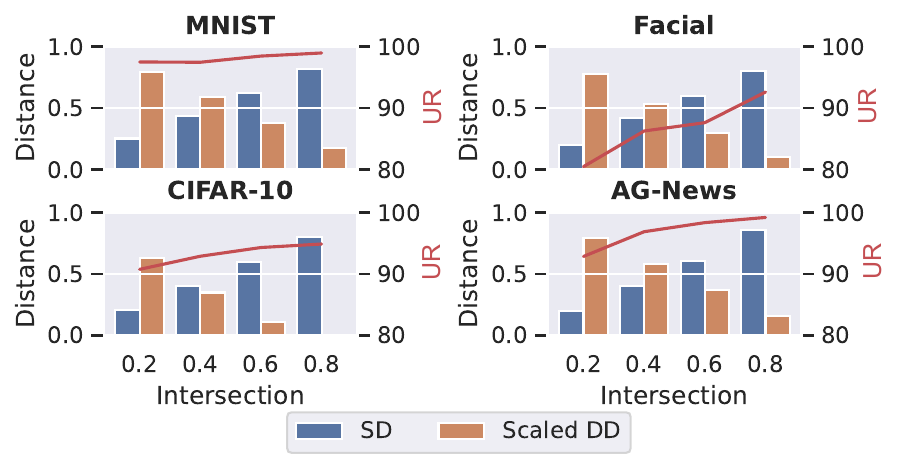}
    \caption{Trade-off between dataset metrics (SD and DD) and the UR for the intersection ratios $0.2, 0.4, 0.6, 0.8$ and the negative one $0.0$. The red curve is the UR (right axis).}
    \label{fig:a2_sddd}
\end{figure}

\begin{figure}[t]
    \centering
    \includegraphics[width=0.9\linewidth]{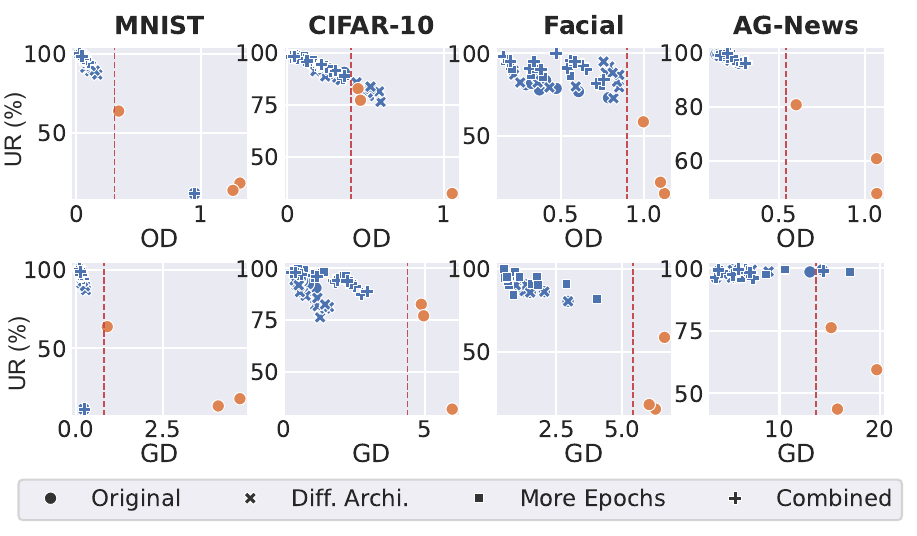}
    \caption{
    Robustness evaluation of defending Attack~\ding{183}.
    The trade-off between the UR and our metrics under different attack severity and adversary model training settings.
    The vertical dashed line represents the threshold.
    }
    \label{fig:a2_rosbustness}
\end{figure}

\iheading{Evaluation of SD and DD.}
We plot the results in \Cref{fig:a2_sddd}.
These two metrics are evaluated with access to the audit set and suspect sub-dataset, enabling a more direct dataset-level comparison.
Here, SD estimates sample-level overlap based on the audit set, while DD captures distribution-level distance, which decreases as the ground-truth overlap ratio increases.

\bheading{Robustness.}
\Cref{fig:a2_rosbustness} evaluates the robustness under different training settings for the adversary's model.
\boxname can correctly detect datasets regenerated under Attack~\ding{183} setting because the GD metric can accurately distinguish the positive and negative datasets on the vision task.
Note that there are tested positive models with OD values higher than the threshold for CIFAR-10, because ImageNet, from which the two negative datasets are sampled with manual annotation alignment, is similar to CIFAR-10 and often selected as the source for CIFAR-10 alternatives (\eg, CINIC~\cite{darlow2018cinic}).
On the other hand, the STL-10 is more different from CIFAR-10 (\eg, in terms of the image tone) and is accurately detected.

For the text classification, the GD metric is not reliable.
The reason is rooted in the architecture difference: the models are based on the pretrained LMs of more parameters than the conventional CNNs thus have more complex decision boundary and can be sensitive to the non-training samples.
We examined the evaded cases and found the cases of GD values higher than threshold are those with a small intersection ratio 0.2 and with more finetuning epochs.
Nevertheless, the OD metric guarantees the correct judgment.

\begin{tcolorbox}[notitle, boxrule=0.5pt,left=0.05cm, right=0.05cm, top=0.05cm, bottom=0.05cm]
\textbf{Takeaway 2:}
For the Attack~\ding{183}, OD and GD are robust against various training settings adopted by the adversary, and SD and DD can correctly quantify the regeneration ratio in the adversary's dataset.
\end{tcolorbox}

\begin{figure}[t]
    \centering
    \includegraphics[width=0.9\linewidth]{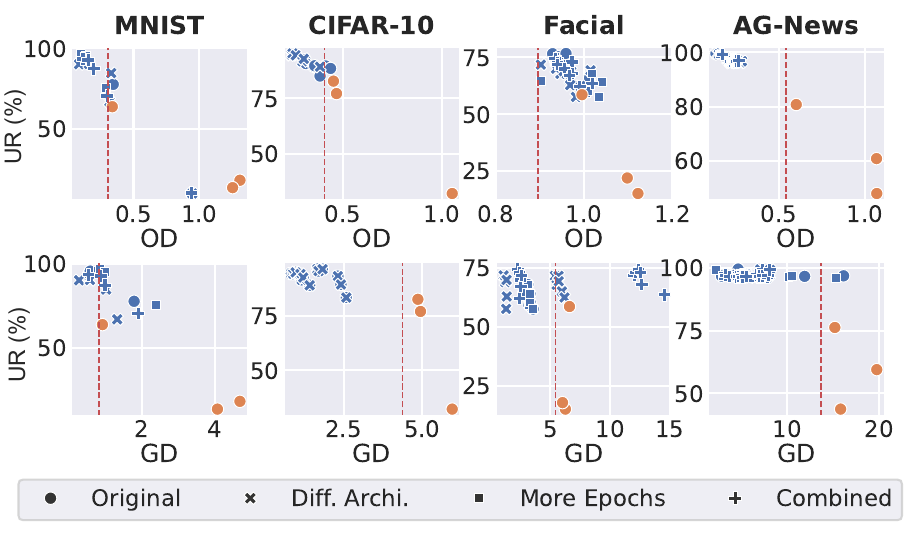}
    \caption{The trade-off between the UR and the model metrics (OD and GD) under different attack severity for Attack~\ding{184} and adversary model training settings. The blue and orange points are positive and negative datasets.}
    \label{fig:a3_rosbustness}
\end{figure}

\subsection{Defending Distribution-level Attack}
\label{subsec:exp_defend_attack3}
In this section, we investigate the effectiveness and robustness against the distribution-level dataset regeneration attack.
For vision tasks, we train generative models (see \Cref{tab:dataset}) and select checkpoints to obtain generated content of varying quality.
For the default generator, we train the EDM model on CIFAR-10 (size $32\times 32$) to an FID score of 2.341 and train StyleGAN3 on FairFace (size $128\times 128$) to an FID score of 12.151, which is comparable to the reported performance~\cite{DBLP:conf/nips/KarrasAAL22,DBLP:conf/nips/KarrasALHHLA21}.

\iheading{Limitation of OD and GD.} We provide the evaluation results of model-level metrics in \Cref{fig:a3_rosbustness} which are measurable when the defender has black-box or white-box model access and no access to partial suspect dataset.
As the adversary dataset is re-sampled from the same distribution as the source dataset, our model metrics can fail under different training settings.
We observe that multiple positive derivatives (blue points) can have both high UR and higher values of OD or GD than the negative ones at the same time.
Take the facial dataset as an example, a remarkable proportion of positive models achieves $\sim$75\% UR (the same level as Attack~\ding{182} in \Cref{fig:a1_rosbustness} and Attack~\ding{183} in \Cref{fig:a2_trade_off_odgd}) and has both model metrics higher than the threshold.
This is expected, because there are no exact overlapping samples and the training-data distribution shift between the positive models and the source model is minimized by the generative model.
Therefore, it is necessary for the dataset metric to identify the distribution-level regenerated dataset.

\begin{table}[t]
\caption{Robustness against Attack~\ding{184} with different generator types and generation sizes of the advanced generator.}
\label{tab:a3_robustness}
\resizebox{\linewidth}{!}{
\begin{tabular}{ccccccccc}
\toprule
\multirow{2}{*}{\textbf{\begin{tabular}[c]{@{}c@{}}Adversary\\ Setting\end{tabular}}} & \multicolumn{2}{c}{\textbf{MNIST}} & \multicolumn{2}{c}{\textbf{CIFAR-10}} & \multicolumn{2}{c}{\textbf{FairFace}} & \multicolumn{2}{c}{\textbf{AG-News}} \\ \cline{2-9} 
 & \multicolumn{1}{c}{UR (\%)} & DD & \multicolumn{1}{c}{UR (\%)} & DD & \multicolumn{1}{c}{UR (\%)} & DD & \multicolumn{1}{c}{UR (\%)} & DD \\ \midrule
\begin{tabular}[c]{@{}c@{}}Default\\ Generator\end{tabular} & \multicolumn{1}{c}{95.59} & \greenbox{1.43} & \multicolumn{1}{c}{92.87} & \greenbox{1.36} & \multicolumn{1}{c}{66.09} & \greenbox{2.52} & \multicolumn{1}{c}{96.75} & \greenbox{8.23} \\ \hline
\begin{tabular}[c]{@{}c@{}}Advanced\\ Generator\end{tabular} & \multicolumn{1}{c}{97.44} & \greenbox{1.18} & \multicolumn{1}{c}{95.63} & \greenbox{1.45} & \multicolumn{1}{c}{93.50} & \greenbox{2.53} & \multicolumn{1}{c}{97.40} & \greenbox{8.22} \\ \hline
Double size & \multicolumn{1}{c}{97.69} & \greenbox{0.94} & \multicolumn{1}{c}{97.29} & \greenbox{1.13} & \multicolumn{1}{c}{91.93} & \greenbox{2.06} & \multicolumn{1}{c}{97.63} & \greenbox{8.00} \\ \hline
Half size & \multicolumn{1}{c}{96.89} & \greenbox{1.18} & \multicolumn{1}{c}{93.39} & \greenbox{1.47} & \multicolumn{1}{c}{81.36} & \greenbox{2.68} & \multicolumn{1}{c}{96.34} & \greenbox{8.21} \\ \midrule
Threshold & \multicolumn{1}{c}{/} & 24.03 & \multicolumn{1}{c}{/} & 1.49 & \multicolumn{1}{c}{/} & 8.37 & \multicolumn{1}{c}{/} & 10.17 \\ \bottomrule
\end{tabular}
}

\end{table}

\iheading{Evaluation of SD and DD.}
Now we investigate whether the SD and DD, measurable under defender's access to partial suspect dataset, can track the distribution-level regeneration. 
As the dataset is regenerated, there are very few samples that resemble the original data samples, and thus the SD values are close to 0 for all the four benchmarks we selected.
As for DD, we evaluate the robustness by varying the attack settings and summarize the results in \Cref{tab:a3_robustness}.
In particular, we evaluate two different generative models for each benchmark.
For the vision task, we use GANs and diffusion models of appropriate scales: we use the StyleGAN2-Ada~\cite{Karras2020ada} for CIFAR-10 and the Denoising Diffusion Probabilistic Models (DDPM)~\cite{DBLP:conf/nips/HoJA20} for MNIST and FairFace. 
For the text, we use both back-translation and Pegasus-based paraphraser~\cite{DBLP:conf/icml/ZhangZSL20,pegasus_paraphraser} as advanced generator.
We also test if the adversary generates dataset of double or half size of the source dataset.

In \Cref{tab:a3_robustness}, we denote the generators that produce training data enabling model to achieve higher UR by ``Advanced'' and the other one as the ``Default'', and we report the generator checkpoint that achieves the highest UR.
Regardless of the generator type or the size of generated dataset, our DD can effectively and \textit{robustly} evaluate the closeness between the distributions of the regenerated dataset and the source dataset.
Next, we perform an in-depth analysis of the factors involved in distribution-level regeneration.

\iheading{Generator Capacity.}
The adversary may adopt a more advanced generator to produce more in-distribution samples.
The dataset produced by the advanced generator has higher quality but not necessarily lower DD value, because our DD assesses the closeness between the distribution of two datasets, instead of the dataset utility closeness.
We manually inspect the difference between generators and found that the better quality comes from finer details (\eg, clearer facial contours) in the generation, which can ameliorate the model convergence to the target distribution.
Nonetheless, the DD remains robust in identifying the regenerated dataset.

\iheading{Amount of Generated Samples.}
We note that having fewer generated samples for training is a limiting factor in developing a highly accurate suspect model.
Consequently, the Attack~\ding{184} may be preferred by the adversary because the adversary can further improve the model performance by increasing the training data size.
However, this comes at the cost of additional computation for generator training, finetuning, and especially inference.
For example, if the adversary's dataset size doubles, the training time also roughly doubles, not counting the cost of dataset generation.
Therefore, we consider doubled data size in this paper and leave larger sizes for future work.
Among the tested cases (\ie, double-size), our dataset metrics achieve robust tracking.

\begin{tcolorbox}[notitle, boxrule=0.5pt,left=0.05cm, right=0.05cm, top=0.05cm, bottom=0.05cm]
\textbf{Takeaway 3:}
\boxname can perform accurate dataset tracking with DD under stronger distribution-level attacks.
\end{tcolorbox}

\subsection{Comparison with Baselines}\label{subsec:exp_compare}
\begin{table}[t]
\centering
\caption{Comparison with existing methods. Each entry is TP/Total for the corresponding attack, and ``Overall TPR'' is computed over all attacks.}
\label{tab:comparison}
\resizebox{\linewidth}{!}{
\begin{tabular}{ccccccccc}
\toprule
\multirow{2}{*}{\textbf{Method}} & \multicolumn{4}{c}{\textbf{CIFAR-10}} & \multicolumn{4}{c}{\textbf{FairFace}} \\ \cline{2-9} 
 & A~\ding{182} & A~\ding{183} & A~\ding{184} & Overall TPR (\%) & A~\ding{182} & A~\ding{183} & A~\ding{184} & Overall TPR (\%) \\ \midrule
 Radioactive~\cite{DBLP:conf/icml/SablayrollesDSJ20} & 1 / 20 & 1 / 9 & 0 / 13 & 4.76 & 0 / 20 & 1 / 8 & 0 / 7 & 2.86 \\ \hline
 DI~\cite{maini2021dataset} & 1 / 20 & 2 / 9 & 0 / 13 & 7.14 & 0 / 20 & 1 / 8 & 0 / 7 & 2.86\\ \hline
 EMA~\cite{huang2022dataset} & 3 / 20 & 2 / 9 & 0 / 13 & 11.90 & 0 / 20 & 1 / 8 & 0 / 7 & 2.86 \\ \hline
UBW~\cite{li2022untargeted} & 4 / 20 & 3 / 9 & 0 / 13 & 16.67 & 0 / 20 & 1 / 8 & 0 / 7 & 2.86 \\ \hline
RAI$^2$~\cite{DBLP:conf/ndss/DongLCXZ023} & 6 / 20 & 5 / 9 & 0 / 13 & 26.19 & 18 / 20 & 7 / 8 & 0 / 7 & 71.43 \\ \hline
Data-use~\cite{datause_ccs2024}  & 15 / 20 & 7 / 9 & 0 / 13 & 52.38 & 18 / 20 & 8 / 8 & 0 / 7 & 74.29 \\ \hline
\textbf{Ours} & \textbf{20 / 20} & \textbf{9 / 9} & \textbf{13 / 13} & \textbf{100.0} & \textbf{20 / 20} & \textbf{8 / 8} & \textbf{7 / 7} & \textbf{100.0} \\ \bottomrule
\end{tabular}
}

\end{table}

\bheading{Setup.}
We compare with state-of-the-art methods including dataset watermarking UBW~\cite{li2022untargeted} and data-use audit~\cite{datause_ccs2024}, inference-based auditing EMA~\cite{huang2022dataset} and RAI$^2$~\cite{DBLP:conf/ndss/DongLCXZ023}.
We use CIFAR-10 and FairFace as they are two representative benchmarks used in prior work.
Since detection tasks typically prioritize a high TPR to minimize missed detections, we compare TPR, defined as $TPR=TP/(TP+FN)$, which also aligns with previous work~\cite{maini2021dataset,huang2022dataset,li2022untargeted,DBLP:conf/ndss/DongLCXZ023,datause_ccs2024} that mainly reports the detection rate.
For fair comparison, we need to unify the output format.
Note that UBW and EMA output binary auditing results that align with \boxname's output format, while RAI$^2$ estimates the proportion of copied samples in the form of a dataset similarity score ranging from 0 to 1.
We map the output to 1 if the dataset similarity is higher than 0.5, and 0 otherwise.
We use 0.5 as a midpoint for binarization; other thresholds trade off misses and false alarms and would require task-specific threshold selection.
In terms of testing samples, we add up to 20 sample-level modifications for Attack~\ding{182} to expand attack settings, select all possible overlap settings for Attack~\ding{183}, and consider datasets generated from various intermediate checkpoints of the generator for Attack~\ding{184}.
In total, we obtain 42 and 35 tested regenerated datasets for CIFAR-10 and FairFace, respectively.

\bheading{Comparison Results}.
In \Cref{tab:comparison}, we observe that \boxname systematically outperforms prior work across different threats and benchmarks.
The improvement comes from more robust detection against Attack~\ding{182} and Attack~\ding{183} and the coverage of Attack~\ding{184}.
For instance, \boxname can robustly detect sample-level attacks while previous watermarking (\eg, \cite{DBLP:conf/icml/SablayrollesDSJ20}) can be bypassed by certain sample-level perturbations (\eg, Glass blur) because of watermark loss caused by compression.
Notably, on FairFace, RAI$^2$~\cite{DBLP:conf/ndss/DongLCXZ023} and Data-use~\cite{datause_ccs2024} have comparable performance with our \boxname in detecting the sample-level and set-level attacks but fail to detect the distribution-level attack.
This signifies the necessity of a holistic approach to track dataset regeneration.

\begin{tcolorbox}[notitle, boxrule=0.5pt,left=0.05cm, right=0.05cm, top=0.05cm, bottom=0.05cm]
\textbf{Takeaway 4:}
\boxname outperforms previous approaches in terms of the detection rate through its multi-scale testing.
\end{tcolorbox}

\subsection{Robustness to Adaptive Attacks}\label{subsec:exp_robust_potential_attack}
\begin{table}[t]
\caption{Evaluation of DP-based adaptive attacks on MNIST. \redbox{Red}/\greenbox{green} indicates values above/below the threshold, respectively.}
\label{tab:adaptive}
\resizebox{\linewidth}{!}{
\begin{tabular}{ccccccccc}
\toprule
\multirow{2}{*}{\textbf{Attacks}} & \multicolumn{4}{c}{$\epsilon =1$} & \multicolumn{4}{c}{$\epsilon = 10$} \\ \cline{2-9} 
 & UR (\%) & OD & GD & DD & UR (\%) & OD & GD & DD \\ \midrule
DP-SGD~\cite{DBLP:conf/ccs/AbadiCGMMT016} & 76.15 & \greenbox{0.248} & \redbox{4.725} & / & 91.14 & \greenbox{0.109} & \redbox{1.637} & / \\ \hline
DP-MERF~\cite{DBLP:conf/aistats/HarderAP21} & 65.76 & \redbox{1.070} & \redbox{4.864} & \greenbox{2.577} & 71.78 & \redbox{1.213} & \redbox{5.224} & \greenbox{2.582} \\ \hline
PrivSet~\cite{DBLP:conf/nips/ChenKF22} & 73.36 & \redbox{1.323} & \redbox{9.482} & \greenbox{12.67} & 92.33 & \redbox{1.362} & \redbox{8.402} & \greenbox{11.73} \\ \bottomrule
\end{tabular}
}

\end{table}

The adversary can attack our framework from the model level and the dataset level with knowledge of \boxname's procedure, including the audit set selection and the metric design.
The main goal is to weaken the signals captured by \boxname.
To achieve this goal, we consider three types of adaptive attacks: 1) a universal adaptive attack based on differential privacy (DP) that reduces the influence of the original training data on the suspect model or dataset, 2) a metric-specific adaptive attack that optimizes against a given metric to raise the metric value above the threshold known by the adversary, and 3) an audit set-targeted adaptive attack where the adversary knows the audit set design and adjusts the suspect dataset before training or preview release.
We do not additionally combine these adaptive attacks with every base attack from Attack~\ding{182} to Attack~\ding{184} because such combinations introduce additional utility loss.
We mainly use MNIST and FairFace as they represent datasets of simple and complex features.

\bheading{Evaluation of DP-based Adaptive Attack.}
For the DP-based adaptive attack, the adversary can exploit DP-SGD~\cite{DBLP:conf/ccs/AbadiCGMMT016} for model-level metrics OD and GD.
For our SD and DD metrics, the adversary generates data with DP for the distribution-level attack in order to increase the difference between regenerated and protected datasets.
We consider MNIST here because existing DP generation techniques mainly focus on simple benchmarks such as MNIST.
For DP-generated data, we use DP-MERF~\cite{DBLP:conf/aistats/HarderAP21} to train a generator that can produce data of better quality (\ie, enabling training models of higher test accuracy) than conventional DP GANs.
In addition, the adversary can directly distill the target dataset into a much smaller dataset using dataset distillation techniques~\cite{DBLP:conf/wacv/ZhaoB23}.
We adopt PrivSet~\cite{DBLP:conf/nips/ChenKF22} as the most recent dataset condensation solution with DP constraints.
\Cref{tab:adaptive} shows the results of DP-based attacks against our framework \boxname.

\iheading{Model-level DP-based Attack.}
As expected, DP applied during training on MNIST has a negative impact on utility (\ie, test accuracy).
The small budget $\epsilon=1$ can reduce the UR to $\sim70\%$, while the large budget $\epsilon=10$ only results in a slight UR drop; it still achieves UR over 90\%, compared to UR$\approx98\%$ under Attack~\ding{182}.
According to \Cref{alg:judge}, our framework \boxname still flags these cases because at least one decisive model metric (OD or GD) remains below the determined threshold.
This suggests that the DP-based attack against OD/GD does not enable evasion without incurring a noticeable utility loss in our setting.

\iheading{Dataset-level DP-based Attack.}
We evaluate the DP-based attack against our remaining dataset metrics.
As the bottom two rows of \Cref{tab:adaptive} show, the regenerated datasets are not identified as the sample-level attack or the set-level attack according to our framework \boxname because the OD and GD values are higher than the threshold.
In addition, similar to the evaluation in \Cref{subsec:exp_defend_attack3}, the SD values are almost 0 because there are few approximate sample pairs between the original and regenerated datasets.
Hence, we rely on DD to distinguish whether the regenerated dataset has a nearly identical distribution.
As \Cref{tab:adaptive} shows, the DD values are below the threshold, demonstrating that the dataset-level attack does not evade \boxname.
Compared to the MNIST results of \Cref{tab:a3_robustness}, the DD values of dataset-distillation-generated datasets (\ie, PrivSet) are higher than those previously evaluated with na\"ive generators, further confirming that a smaller regeneration dataset size can increase DD at the cost of degraded UR.

\bheading{Metric-specific Adaptive Attacks.}
With knowledge of the metric design and the target threshold (\ie, estimated from the design of \boxname), the adversary can add a metric-maximization term weighted by $w$ to the optimization objective to increase the metric value for their trained model or generated dataset.
When the defender can only compute OD and GD metrics, the adversary minimizes the model-training loss minus $w\cdot d(\vicf, \advf, \auditset^{\cA})$, where $d\in\{OD, GD\}$ and $\auditset^{(M, \cA)}$ is an adversary-generated audit set following the same procedure as $\auditset^M$.
As the adversary does not know the actual $\auditset^M$ due to random sampling, the $\auditset^{(M, \cA)}$ is sampled in each step.
The adversary can adopt better hyperparameters than the default setting, so we applied a grid search on optimizers, training epochs from 10 to 200, learning rates $\{0.001,0.01, 0.1\}$, and finally selected training of 200 epochs with learning rate 0.1 using Adam.
For the DD-specific attack (\ie, improving Attack~\ding{184}), we mainly focus on MNIST because of lower training cost.
We use conditional GAN on MNIST for experiments, and add DD as an optimization term with different penalty weights $w$.
We apply similar hyperparameter optimization on the generator training.

\begin{figure*}[t]
    \centering
    \includegraphics[width=0.95\linewidth]{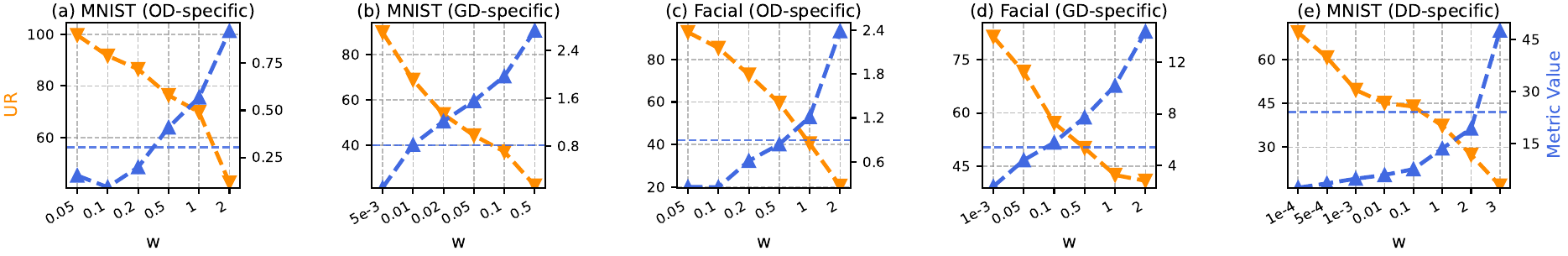}
    \caption{Evaluation of metric-specific adaptive attacks.}
    \label{fig:metric-specific}
\end{figure*}

\Cref{fig:metric-specific} (a) and \Cref{fig:metric-specific} (c) show the OD-specific adaptive attacks on MNIST and FairFace, respectively.
We can see that small $w$ is already effective to increase the auditing metrics and large penalty weight $w$ can significantly degrade the model utility.
The sweet spot between the utility and the auditing metric value differs between MNIST and FairFace.
For instance, on MNIST the adversary begins to bypass OD-only auditing around $w=0.5$, but the UR has already dropped below 80\% and decreases further as $w$ grows.
Meanwhile, on FairFace, $w=0.5$ retains about 60\% UR but remains near the OD threshold, while larger weights cross the threshold at substantially lower UR because the facial data are of larger scale and make the model more sensitive to the metric penalty.

For the GD-specific adaptive attacks on MNIST and FairFace in \Cref{fig:metric-specific} (b) and \Cref{fig:metric-specific} (d), we observe a similar trend but the model's performance is more sensitive to the GD penalty, as UR drops faster than under the OD penalty.
This is expected because the GD penalty can have a larger impact on the model weights during training through gradients of larger norms.
To make GD higher than our predefined threshold, the UR is around 70\% and less than 60\% for MNIST and FairFace respectively, suggesting that GD is harder to attack adaptively.
Meanwhile, it increases the adversary's computational cost because GD depends on per-example parameter gradients, and optimizing it builds a second-order computation graph which slows the training and occupies more memory.

For the DD-specific adaptive attack shown in \Cref{fig:metric-specific} (e), we found that the UR values are generally lower than those in previous OD-specific and GD-specific adaptive attacks on MNIST.
Specifically, in this case, the generative model is more sensitive to the penalty weight and often fails to converge stably to a low loss.
Therefore, the final generator produces low-quality images, which makes the data unlearnable for the subsequent classification models.
The results demonstrate a trade-off between the adaptive attack strength (\ie, $w$) and the final model or dataset utility: increasing the metric-specific penalty enough to bypass detection comes with utility loss.

\begin{table}[t]
\caption{Evaluation of audit set-targeted adaptive attacks on FairFace.}
\label{tab:adaptive-audit-set}
\resizebox{0.8\linewidth}{!}{
\begin{tabular}{@{}cccc@{}}
\toprule
\textbf{Target}        & \textbf{Strategy}             & \textbf{UR} & \textbf{Judgment}            \\ \midrule
\multirow{3}{*}{OD/GD} & 10\% high-loss sample removal & 86.86\%     & Attack~\ding{182}/\ding{183} \\
                       & 20\% high-loss sample removal & 85.71\%     & Attack~\ding{182}/\ding{183} \\
                       & 30\% high-loss sample removal & 84.24\%     & Independent                  \\
                       \midrule
\multirow{3}{*}{DD} & 10\% Low-loss sampled preview       & -           & Attack~\ding{184}            \\
                       & 10\% High-loss sampled preview      & -           & Independent                  \\
                       & Single-class preview          & -           & Independent                  \\ \bottomrule
\end{tabular}
}
\end{table}

\bheading{Adaptive Attacks for Audit Set.}
We consider two adaptive strategies with knowledge of the design of $\auditset^M$ and $\auditset^D$, targeting the model-level metrics OD/GD and dataset-level metrics SD/DD, respectively.
The adversary cannot control $\auditset^M$ but can remove potential audit-set samples by filtering out samples with loss higher than $\tau_{audit}$.
Since $\tau_{audit}$ is a hyperparameter set by the defender and should be adjusted based on the adversary's own model, one direct method for the adversary is to remove high-loss samples.
Then, the adversary directly trains $\advf$ on the remaining data.
The top half of \Cref{tab:adaptive-audit-set} shows that removing 30\% of the samples can lead to an independent judgment by \Cref{alg:judge} when the audit is limited to model-level metrics (OD/GD).
Further exploration on the removed data shows that our previous $\tau_{audit}$ corresponds to the top 24\%, and removing less than 20\% high-loss samples is equivalent to Attack~\ding{183}, which is detectable as shown in \Cref{fig:a2_trade_off_odgd}.
Although removing 30\% can evade model-level tracking with OD/GD, without knowledge of $\tau_{audit}$, the adversary cannot reliably guess the correct removal ratio to win the game in \Cref{subsec:formulation} and still risks utility loss or being tracked.

In addition, we consider an adversary that regenerates with Attack~\ding{184} and controls $\auditset^D$ (\eg, an adversary-controlled preview subset on the platform), which is sampled from regenerator-produced data of 10\% low-loss or high-loss samples assessed by a surrogate model directly trained on the obtained $\advdst$, or from a random single class.
In the bottom half of \Cref{tab:adaptive-audit-set}, we find that DD can still track the low-loss case but fails for single-class and high-loss cases because of significant distribution shift.
Although the adversarial selection of $\auditset^D$ does not degrade $\advdst$, the high-loss samples can contain low-quality generated data, and a single class does not fully represent $\advdst$ well enough to attract users.

\begin{tcolorbox}[notitle, boxrule=0.5pt,left=0.05cm, right=0.05cm, top=0.05cm, bottom=0.05cm]
\textbf{Takeaway 5:}
The evaluated adaptive attacks against \boxname can evade detection by increasing metric values or manipulating the audit set, but they incur dataset utility degradation, poorer preview representativeness, or memory and computational cost.
\end{tcolorbox}

\section{Related Work}
\label{sec:related}
In this section, we review existing copyright protections and auditing strategies for models and datasets.

\bheading{Models.}
The owner can watermark the proprietary model before release and testify the watermark to claim ownership, but this type of watermarking can be vulnerable to various adaptive attacks~\cite{DBLP:conf/sp/LukasJLK22}.
Similarly, white-box watermarking, which manipulates the weights with watermarking order in a functionality-preserving way, is also shown unreliable~\cite{DBLP:conf/uss/YanP0023} under adversarial settings.
Another direction is leveraging the inherent model patterns for ownership verification.
For example, PublicCheck~\cite{wang2022publiccheck} realizes public verification of a deployed model at runtime through fingerprint samples.
RAI$^2$~\cite{DBLP:conf/ndss/DongLCXZ023} verifies model identity using random projection without the need for fingerprints or watermarks.
DeepJudge~\cite{DBLP:conf/sp/ChenWPS0JM0S22} implements a testing framework of six metrics for model similarity testing.
Our work also informs model ownership disputes by auditing dataset usage, which directly affects model performance.

\bheading{Datasets.}
There is also a line of work attempting to track dataset usage~\cite{du2024sok}.
Dataset watermarking~\cite{DBLP:conf/icml/SablayrollesDSJ20,yimingli_tifs23} refers to marking datasets with backdoor-alike samples.
Yet, the robustness is questionable under white-box access and low-sample settings~\cite{DBLP:journals/corr/abs-2202-12506}.
On the other hand, non-invasive auditing can better preserve the dataset utility.
Dataset Inference~\cite{maini2021dataset} tests whether the model is trained on the source dataset and is recently extended to the LLM datasets~\cite{maini2024llm}.
EMA~\cite{huang2022dataset} infers the dataset identity through enhanced membership inference to the suspect model.
RAI$^2$~\cite{DBLP:conf/ndss/DongLCXZ023} uses a look-up table to estimate dataset similarity via black-box querying of the suspect model.
ORL-AUDITOR~\cite{dataset_auditing_rl} audits trajectory usage to protect the copyright of offline reinforcement learning datasets. 
Nevertheless, prior methods largely provide binary evidence of dataset use from shallow signals (\eg, memorization) and typically do not aim to attribute the underlying regeneration mechanism, especially for distribution-level polishing.
Our framework covers more advanced regeneration attacks and provides holistic auditing with coarse-grained attribution between shallow-feature regeneration (Attack~\ding{182}/\ding{183}) and distribution-level polishing (Attack~\ding{184}).
In addition to the sample-, user-, and dataset-level categories surveyed in prior work~\cite{du2024sok}, we further study distribution-level dataset tracking.

\section{Discussion \& Conclusion}
\label{sec:discussion}
In this work, we propose \boxname, a multi-scale testing framework to track regenerated datasets.
We validate the effectiveness and robustness of our framework on vision and text classification tasks.

\bheading{Limitation \& Future Work.}
\boxname currently assumes white-box model access and partial dataset access to enable full-scale testing with all metrics. This assumption is realistic for platform-based audits, but it limits the applicability of \boxname in offline transactions where the auditor cannot inspect the suspect model or dataset.
In the black-box regime, tracking regenerated datasets is particularly challenging under distribution-scale perturbations, because there may be no near-duplicate samples whose memorization can be reliably detected from model outputs alone. As future work, we plan to explore deep-learning interpretability techniques to extract and compare learned representations from black-box models, which may provide stronger signals for tracking regenerated datasets.

\bheading{Scalability.}
Although \boxname is currently designed for domain-specific datasets that are typically small, scaling to larger datasets used to train classification or generative models is also feasible.
For large classification datasets, we can adopt a divide-and-conquer strategy.
Specifically, we can measure local similarity between the suspect and source datasets and aggregate the results into a global similarity assessment.
The downstream judgment process should also be adapted accordingly.
For large datasets used to train generative models, we can, for example, adapt OD and GD to generative models by replacing the output with latent representations because the model's internal representations can also memorize training data~\cite{ganleak}.

\bibliographystyle{ACM-Reference-Format}
\bibliography{ref}

\appendix

\section{Ethical Considerations}

\bheading{Context and Goal.}
This work studies dataset-regeneration (piracy) threats and proposes \boxname, a multi-scale testing framework that helps an auditor assess whether a suspect dataset is regenerated from a protected victim dataset under sample-, set-, and distribution-level regeneration primitives.
Our experiments are conducted offline on datasets and models obtained from open-source platforms (\eg, Hugging Face) or prior literature, under the permissions and licenses specified by those sources.
We obtained IRB approval from the authors' institutions and did not recruit or interact with human participants.

\bheading{Stakeholders.}
We consider the following stakeholders potentially impacted by our research:
\begin{enumerate}
    \item \emph{Dataset creators or owners} (\ie, victim in the paper) whose proprietary or scarce datasets may be pirated;
    \item \emph{Model developers or data users} who may be accused of piracy (rightly or wrongly);
    \item \emph{Platforms and auditors} (\eg, model or dataset hosting services) who may deploy auditing pipelines;
    \item \emph{Data subjects} represented in datasets, especially sensitive domains (\eg, facial or biometric);
    \item \emph{End users and society} who benefit from accountable AI supply chains and reduced illicit reuse;
    \item \emph{Adversaries} who may attempt to evade auditing or misuse released artifacts.
\end{enumerate}

\bheading{Potential Benefits.}
\boxname can improve transparency and accountability of dataset usage by enabling non-intrusive detection signals beyond exact-sample matching, including distribution-level evidence, which can support dispute resolution and deter dataset piracy. It may also incentivize better governance for scarce and high-value datasets.

\bheading{Potential Harms and Rights Considerations.}
Any tracking result may be misinterpreted as definitive proof of wrongdoing, potentially harming legitimate users through reputational or legal consequences.
Adversaries could use the insights to tune regeneration procedures or evade detection, potentially increasing piracy success at high cost.
Although we do not collect new personal data, some evaluated benchmarks may involve human-related content (\eg, FairFace dataset). Mishandling or over-sharing derived artifacts could amplify privacy risks.
Dataset licenses and terms of use vary; irresponsible redistribution of datasets or derivative artifacts could violate legal constraints.

\bheading{Mitigations.}
We position \boxname as an \emph{auditing support tool} rather than a standalone legal verdict.
Therefore, conclusions should be corroborated with additional evidence and human review.
We limit the release of materials that would directly enable dataset piracy at scale (see details in the Open Science section), while still providing sufficient artifacts to evaluate our defensive methodology.
We do not probe or test live systems or services without consent, and we do not perform vulnerability exploitation on third-party infrastructure.
We rely on publicly available datasets and models and follow their usage constraints.
We do not attempt to re-identify individuals, and we avoid publishing any new sensitive annotations or linking information beyond what is already present in the original sources.

\bheading{Decision Rationale.}
Balancing benefits (\ie, deterrence and accountability for dataset piracy, and improved auditing capabilities) against
harms (\ie, dual-use and misinterpretation risks), our work is ethically justified \emph{with mitigations}: careful release choices, and avoidance of new human-subject data collection.
We will revisit the release of additional artifacts only if and when doing so does not materially increase misuse risk, and if appropriate safeguards are available.

\section{Open Science}

We release the implementation of \boxname~\footnote{\url{https://zenodo.org/records/20746861}}, including the end-to-end auditing pipeline and a user-friendly interface, together with environment specifications and installation instructions to facilitate reproducibility.
To reduce misuse risk, we do not release turnkey implementations of the three dataset-regeneration attacks studied in this paper.
Nevertheless, we provide sufficient methodological details to re-implement these attacks.
Moreover, all evaluations are conducted on publicly available datasets; thus, interested users can readily re-create the experimental setup and test the described attack primitives on open datasets without requiring any proprietary dataset.

\section{Generative AI Usage}

We used GPT-5.2 and GPT-5.4 as a writing assistant to improve clarity and presentation (\eg, polishing wording, reorganizing paragraphs, and improving readability) during manuscript revision.
The tool was \emph{not} used to generate experimental data, run experiments, produce figures, or create new technical claims.

\section{Additional Experimental Details}\label{appendix:additional_detail}

\subsection{Measurement Details}
\label{appendix:measurementdetails}
The verification process was conducted by three authors. Initially, two authors compared suspected duplicate samples filtered by $L_2$ distance to confirm whether they were identical. The criterion for judging two samples as exact duplicates is that they represent the same content and details. We use an $L_2$ filter with threshold 0.1 as a preliminary step; zero $L_2$ distance indicates exact duplicates. The manual comparison aimed to exclude cases of small non-zero $L_2$ distance where discrepancies arose due to noise, image post-processing (\eg, overexposure), or incomplete images (\eg, nearly blank images).

For the inter-annotator agreement protocol, if the two evaluators reached conflicting conclusions, a third author with a background in AI and medicine resolved the disagreement by determining whether the tumor images were identical. This process ensured the reliability of the manual verification and the reproducibility of the duplication findings.

\subsection{Details and Additional Evaluation}
\label{appendix:details_and_results}

\begin{figure}[t]
    \centering
    \includegraphics[width=0.95\linewidth]{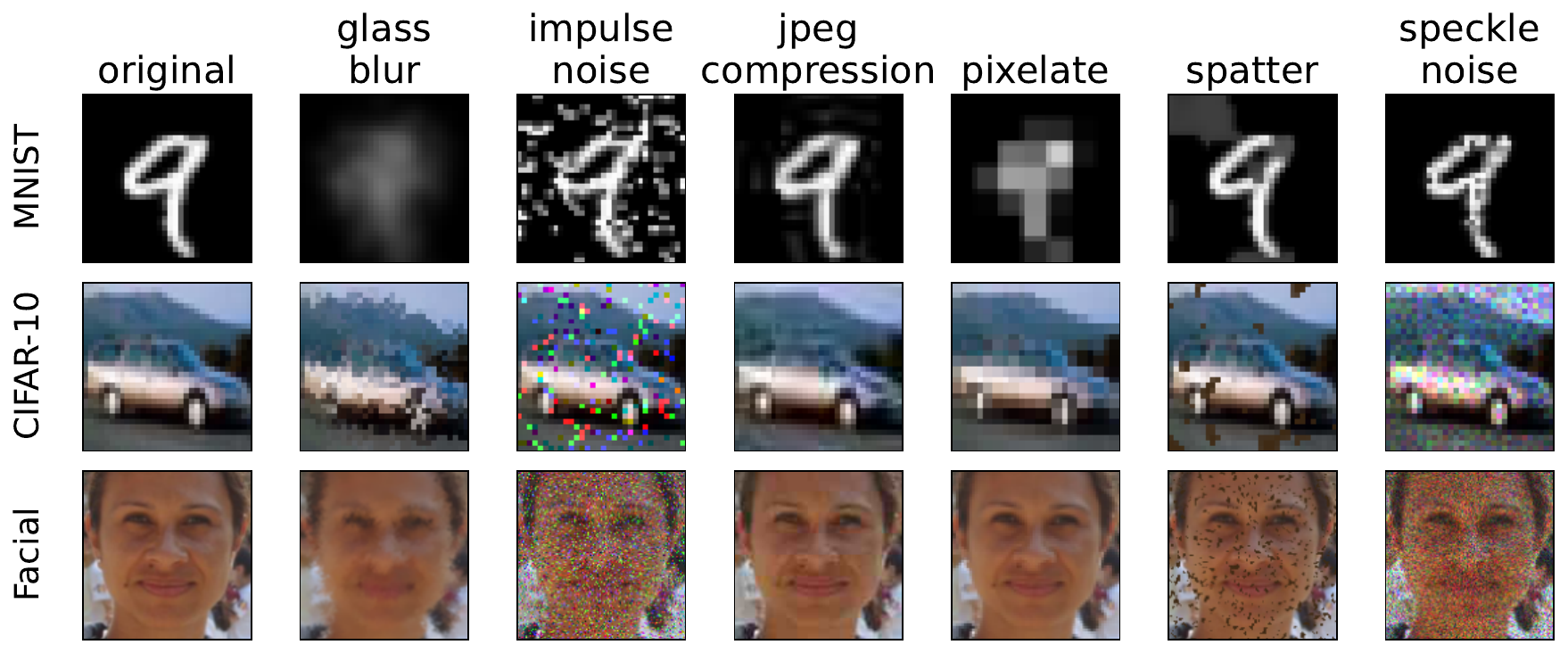}
    \caption{Examples of sample-level regeneration attack.}
    \label{fig:a1_visualize}
\end{figure}

\begin{figure}[t]
    \centering
    \includegraphics[width=0.95\linewidth]{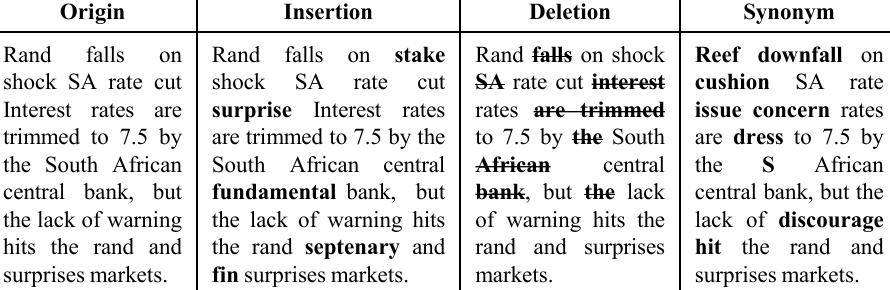}
    \caption{Text examples of sample-level regeneration attack.
    The modified words are highlighted.
    }
    \label{fig:example_agnews}
\end{figure}

\bheading{Evaluation on tumor datasets.}
We evaluated \boxname on the tumor datasets measured in the paper and on additional open datasets collected after our initial measurement study. In total, we gathered 13 datasets.
In our initial analysis, \boxname reports signals consistent with potential post-processing (Attack \ding{182}) and exact duplication (Attack \ding{183}) among 10 datasets, and no such signals in the remaining 3 datasets. This aligns with our measurement results.

\bheading{Default Training.}
In the default setting, we train models for 200 epochs with learning rate 0.1 and with SGD to ensure convergence.
For the text, the default finetuning setting for the pretrained smaller BERT models is 3 epochs with learning rate 5e-4.

\bheading{Metric Setting.}
We set $L_2$ norm to measure the data distance in the metrics.
As for the metric SD, we set the closed ball's radius $r=16/255=0.0625$ for image data and $2.77$ as the $L_2$ norm of a full-$0.1$ embedding.
For DD, we instantiate $\psi_\theta$ as a two-layer ReLU MLP with hidden dimension 128 and output dimension 128, with all parameters (weights and biases) sampled i.i.d.\ from $\mathcal{N}(0,1)$ and kept fixed (not trained).
For vision tasks, the MLP input is the raw pixel vector (flattened image); for text tasks, the MLP input is the BERT \texttt{[CLS]} embedding as described in \Cref{subsec:expsetup}.
Following the selection process of prior work~\cite{DBLP:conf/sp/ChenWPS0JM0S22}, we use the default threshold ratio $\alpha_\lambda=0.9$ for $\lambda\in\{OD,GD,DD\}$.

\begin{figure}[t]
    \centering
    \includegraphics[width=0.8\linewidth]{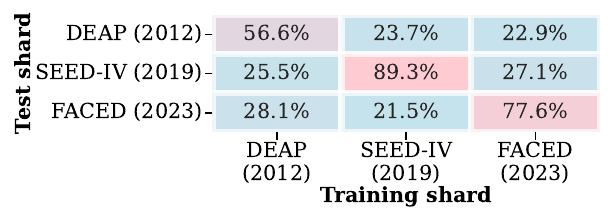}
    \caption{An example experiment of transfer accuracy between different EEG feature datasets, each of which is split into training and test shards. For each training shard, we train a model and test across three test shards. Low transfer accuracy indicates the importance of in-distribution dataset for specific tasks.}
    \label{fig:transfer-eeg}
\end{figure}

\bheading{Experiments on EEG Datasets.}
We justify the dataset importance for domain-specific task through an example of emotion classification via Electroencephalography (EEG) data in \Cref{fig:transfer-eeg}.
We surveyed and chose three classic and latest representative (licensed) EEG \textit{feature} benchmarks: DEAP~\cite{koelstra2011deap}, SEED-IV~\cite{zheng2018emotionmeter} and FACED~\cite{chen2023large}.
Although we manually unify labels, the transfer accuracy scores are low, possibly because of unbalanced labeling and different collections, indicating that, given test data, training accurate models requires its in-distribution data, which may not be accessible to external developers and thus worth protection.
In \Cref{fig:transfer-eeg}, we test transferred accuracy of emotion classification using ResNet-18, where we observe the consistent degradation of data distribution discrepancy, indicating the uniqueness of dataset for building exclusive AI products.

\bheading{Combined Attacks.}
Without loss of generality, we consider the combination of shallow and deep feature attacks: the adversary applies Attack~\ding{182} or \ding{183} to the dataset generated by the ``Advanced generator'' trained on MNIST (\Cref{tab:a3_robustness}).
We apply high severity of post-processing in \Cref{tab:a1mod_all_distance} for Attack~\ding{182}.
Among the tested post-processing variants, Speckle noise yields the highest UR (93.57\%) and Glass blur yields the lowest UR (69.81\%), with DD values 3.14 and 11.12, respectively.
Both DD values are under the threshold of MNIST, indicating the Attack~\ding{182} is detectable.
However, in this case our judgment cannot reliably distinguish the two shallow-feature attack types (Attack~\ding{182} vs.\ \ding{183}).
For Attack~\ding{183}, we tested ratio $\{0.6,0.8\}$ which leads to UR values 95.44\% and 96.63\% with DD values 13.79 and 6.38, respectively. 
Therefore, the simple combination cannot increase the measured distances without degrading data utility.

\section{Theoretical Analysis}
\label{appendix:theory}
We provide detailed justifications for OD, GD and DD.

\subsection{Justification of OD}
\label{app:od}

We formalize why OD is informative under a standard stability route. We emphasize that stability is a sufficient (not necessary) condition for generalization; we adopt it to obtain a transparent bound.
We first introduce the uniform stability~\cite{bousquet2002stability}.
\begin{definition}[Uniform stability]
A learning algorithm $L$ has uniform stability $\beta>0$ w.r.t.\ loss $\ell$ if for any two
datasets $S,S'$ of size $n$ that differ in at most one example,
\begin{equation}
\sup_{z}\, \big|\ell(L(S),z)-\ell(L(S'),z)\big|\ \le\ \beta.
\end{equation}
\end{definition}

A classical result links uniform stability to expected generalization gap for bounded losses.
We use it as motivation; our OD bound below relies on prediction stability.

\begin{lemma}[Expected generalization gap under stability]
\label{lem:gen-gap}
Assume $\ell\in[0,1]$ and $L$ has uniform stability $\beta$. Then
\begin{equation}
\big|\mathbb{E}_{S}\big[ R(L(S)) - R_S(L(S))\big]\big|\ \le\ \beta,
\end{equation}
where $R(\cdot)$ is population risk and $R_S(\cdot)$ is empirical risk on $S$.
\end{lemma}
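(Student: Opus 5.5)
The plan is the classical replace-one-sample (symmetrization) argument of Bousquet and Elisseeff. Let $S=(z_1,\dots,z_n)$ be drawn i.i.d.\ from the data distribution $\mathcal{D}$, and let $z'_1,\dots,z'_n$ be an independent ghost sample from the same distribution. For each $i$, let $S^{(i)}$ denote $S$ with $z_i$ replaced by $z'_i$. By construction, $S$ and $S^{(i)}$ differ in at most one example.

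\emph{Step 1: rewrite the population risk.} Since $z'_i$ is independent of $S$,
\[
\mathbb{E}_S[R(L(S))]=\mathbb{E}_{S,z'_i}[\ell(L(S),z'_i)].
\]
Because $(S,z'_i)$ and $(S^{(i)},z_i)$ have the same joint law (swap $z_i$ and $z'_i$), this equals $\mathbb{E}_{S,z'_i}[\ell(L(S^{(i)}),z_i)]$.

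\emph{Step 2: rewrite the empirical risk.} By linearity,
\[
\mathbb{E}_S[R_S(L(S))]=\frac1n\sum_{i=1}^n\mathbb{E}_S[\ell(L(S),z_i)].
\]
Averaging the identity from Step 1 over $i$ in the same way, we get
\[
\mathbb{E}_S\big[R(L(S))-R_S(L(S))\big]=\frac1n\sum_{i=1}^n\mathbb{E}_{S,S'}\big[\ell(L(S^{(i)}),z_i)-\ell(L(S),z_i)\big].
\]

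\emph{Step 3: apply stability.} Uniform stability, evaluated at the point $z=z_i$ for the neighboring pair $(S,S^{(i)})$, bounds each integrand in absolute value by $\beta$. Pulling the absolute value inside the expectation and the average gives the claimed bound $\beta$. The assumption $\ell\in[0,1]$ is needed only for integrability; it is not used quantitatively.

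The only delicate point is the exchangeability step in Step 1. One must check that the definition's supremum over $z$ includes points that are themselves in the training set. It does, since the supremum is over all $z$. One must also check that the size-$n$ constraint is respected, which holds because replacement preserves size. If $L$ is randomized, I would either condition on its internal randomness or assume the stability definition holds in expectation over it.
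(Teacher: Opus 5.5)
Your argument is correct, and it is the classical replace-one (ghost-sample) proof of Bousquet and Elisseeff, the standard result the paper invokes. The paper states this lemma without proof and uses it only to motivate the OD analysis. Your bound comes out as exactly $\beta$ rather than the $2\beta$ of removal-based stability because the paper's definition is the replace-one version, which applies directly to the neighboring pair $(S,S^{(i)})$ in your Step 3.
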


\bheading{Prediction stability.}
Uniform stability is defined on loss and does not automatically upper-bound output differences
$\|f_S(x)-f_{S'}(x)\|$ without additional assumptions.
We therefore introduce a prediction-stability
condition.
This condition is standard in stability-based analyses for regularized learning algorithms~\cite{hardt2016train} (\eg, DNN training), and we treat it as an explicit modeling assumption for our setting.

\begin{definition}[Prediction stability]
\label{def:pred-stab}
Let $f_S=L(S)$ denote the learned predictor. We say $L$ is \emph{prediction-stable} with parameter
$\beta_f$ if for any adjacent datasets $S,S'$ (differing in one example),
\begin{equation}
\sup_{x}\, \|f_S(x)-f_{S'}(x)\|\ \le\ \beta_f.
\end{equation}
Intuitively, $\beta_f$ upper-bounds the worst-case \emph{output perturbation} (measured in the same norm as OD) induced by replacing a single training example; it depends on the learning algorithm, regularization, optimization dynamics, and early stopping.
\end{definition}

\iheading{OD bound for $k$-sample regeneration.}
Let $\vicdst$ be the victim dataset and $\advdst$ be a regenerated dataset that differs by at most $k$ samples.
Construct a sequence $S_0,\dots,S_k$ such that $S_0=\vicdst$, $S_k=\advdst$, and $S_{i+1}$ differs from $S_i$ in one example.
Let $f_i=L(S_i)$.

\begin{proposition}[OD scales at most linearly with $k$ under prediction stability]
\label{prop:od-linear}
Assume $L$ is prediction-stable with parameter $\beta_f$ (\Cref{def:pred-stab}). Then for any audit set
$X_J$,
\begin{equation}
\frac{1}{|X_J|}\sum_{x\in X_J} \|f_\cV(x)-f_\cA(x)\|
\ \le\ k\,\beta_f,
\end{equation}
where $f_\cV=L(\vicdst)$ and $f_\cA=L(\advdst)$.
Consequently, $\mathrm{OD}(f_\cV,f_\cA,X_J)=O(k)$.
\end{proposition}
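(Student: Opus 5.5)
The plan is a telescoping (hybrid) argument along the chain $S_0,\dots,S_k$ constructed just before the statement, followed by averaging over the audit set. Fix any $x$ in the input space. Since $f_0=L(S_0)=f_\cV$ and $f_k=L(S_k)=f_\cA$, I would write
\begin{equation*}
f_\cV(x)-f_\cA(x)=\sum_{i=0}^{k-1}\big(f_i(x)-f_{i+1}(x)\big).
\end{equation*}
The triangle inequality for the norm used in OD then gives $\|f_\cV(x)-f_\cA(x)\|\le\sum_{i=0}^{k-1}\|f_i(x)-f_{i+1}(x)\|$.

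Next I bound each summand using prediction stability (\Cref{def:pred-stab}). By construction, $S_i$ and $S_{i+1}$ are adjacent, i.e., they differ in exactly one example. Hence
\begin{equation*}
\|f_i(x)-f_{i+1}(x)\|\le\sup_{x'}\|f_{S_i}(x')-f_{S_{i+1}}(x')\|\le\beta_f .
\end{equation*}
Summing the $k$ terms yields $\|f_\cV(x)-f_\cA(x)\|\le k\beta_f$ for every $x$, uniformly in $x$. Averaging this pointwise bound over $x\in X_J$ preserves it, which gives the displayed inequality. The $O(k)$ statement for OD follows because $\beta_f$ depends only on $L$ and not on $k$ or $X_J$. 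This requires the norm in \Cref{def:pred-stab} to be the same $l_p$ norm as in OD, which the definition stipulates.

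The step needing the most care is justifying the chain itself, since \Cref{def:pred-stab} is stated for datasets of equal size $n$ that differ by a replacement. If $\advdst$ has the same size as $\vicdst$ and differs in at most $k$ positions, I replace the differing examples one at a time. Each intermediate $S_i$ then has size $n$, and the chain has at most $k$ steps; if fewer are needed, the bound only improves. If the sizes differ (e.g., Attack \ding{183} mixing), I would interpret ``differs by at most $k$ samples'' as at most $k$ replacement operations, so the argument is unchanged. Alternatively, I would note that the definition is applied with adjacency including insertion or deletion, in which case the same telescoping goes through verbatim.

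A second subtlety is stochastic training, where $L$ is randomized. I would either fix the algorithm's randomness (seed) across the whole chain so that each $f_i$ is a deterministic function of $S_i$, or read \Cref{def:pred-stab} for the fixed-randomness coupling. No further machinery is needed; in particular, \Cref{lem:gen-gap} serves only as motivation and is not invoked.
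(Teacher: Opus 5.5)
Your proposal is correct and follows the paper's proof: you telescope $f_\cV(x)-f_\cA(x)$ along the chain $S_0,\dots,S_k$, apply the triangle inequality, bound each adjacent term by $\beta_f$ using prediction stability, and then average over $X_J$. Your remarks on how to build the chain when dataset sizes differ, and on fixing the training randomness, are sensible caveats; the paper leaves both implicit in its assumption that such a chain of adjacent datasets exists.
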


\begin{proof}
By triangle inequality,
\begin{equation}
\|f_\cV(x)-f_\cA(x)\|=\|f_0(x)-f_k(x)\|\le \sum_{i=0}^{k-1}\|f_i(x)-f_{i+1}(x)\|.
\end{equation}
Applying \Cref{def:pred-stab} to each adjacent pair gives $\|f_i(x)-f_{i+1}(x)\|\le \beta_f$ for all $x$,
hence $\|f_\cV(x)-f_\cA(x)\|\le k\beta_f$. Averaging over $X_J$ yields the claim.
\end{proof}

Proposition~\ref{prop:od-linear} shows that OD is upper-bounded by a term linear in $k$, the number of training points that differ between $\vicdst$ and $\advdst$.
For Attack~\ding{183}, if the adversary preserves an overlap ratio $p$ with the victim set, then $k \approx (1-p)\lvert \vicdst\rvert$, which explains the near-linear OD trend observed in \Cref{fig:a2_rosbustness}.
This also clarifies why achieving simultaneously small OD and high utility requires keeping the effective perturbation size $k$ limited, motivating our complementary GD/DD checks for stronger adversaries.

\subsection{Justification of GD}
\label{app:gd}

We connect GD to influence-function analysis and clarify the assumptions needed for a rigorous interpretation.
For twice-differentiable empirical risk minimization with optimum $\hat\theta$, the (upweighting) influence of a
training point $z_{\text{train}}$ on the loss at a test point $z_{\text{test}}$ admits the classical approximation:
\begin{equation}
I_{\mathrm{up}}(z_{\text{train}},z_{\text{test}})
\ \approx\
\nabla_\theta \ell(z_{\text{test}},\hat\theta)^\top
H_{\hat\theta}^{-1}
\nabla_\theta \ell(z_{\text{train}},\hat\theta),
\end{equation}
where $H_{\hat\theta}$ is the Hessian of the empirical objective at $\hat\theta$.

\iheading{Self-influence and a curvature-controlled upper bound.}
The self-influence of a point $z$ is often summarized by the quadratic form
\begin{equation}
I_{\mathrm{self}}(z)\ :=\ \nabla_\theta \ell(z,\hat\theta)^\top H_{\hat\theta}^{-1}\nabla_\theta \ell(z,\hat\theta).
\end{equation}

The $LNG(f,x)$ metric, and by extension GD, is a computationally efficient, Hessian-free surrogate
for a sample's \emph{self-influence}.
Under the classical influence-function approximation, the (self-)influence
of a sample $z=(x,y)$ around the optimum $\hat{\theta}$ takes the quadratic form
$I_{\mathrm{self}}(z)=g(z)^\top H_{\hat{\theta}}^{-1} g(z)$, where $g(z)=\nabla_{\theta}\ell(z;\hat{\theta})$ and
$H_{\hat{\theta}}$ is the Hessian of the empirical objective.
Since $LNG(f,x)$ measures the magnitude of per-sample parameter gradients across layers (with scale normalization),
it serves as a Hessian-free proxy for $g(z)$.

\begin{proposition}[Gradient norm controls self-influence under strong convexity / local curvature]
\label{prop:gd-influence}
Assume $H_{\hat\theta}\succeq \mu I$ for some $\mu>0$. Then for any point $z$,
\begin{equation}
I_{\mathrm{self}}(z)\ \le\ \frac{1}{\mu}\,\big\|\nabla_\theta \ell(z,\hat\theta)\big\|_2^2.
\end{equation}
\end{proposition}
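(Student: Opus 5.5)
The plan is a direct spectral argument about the inverse Hessian, applied to the quadratic form defining $I_{\mathrm{self}}(z)$.

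First, the assumption $H_{\hat\theta}\succeq \mu I$ with $\mu>0$ makes $H_{\hat\theta}$ symmetric positive definite, so it is invertible and $I_{\mathrm{self}}(z)$ is well defined. Since $H_{\hat\theta}$ is a Hessian it is symmetric, so I will use its spectral decomposition $H_{\hat\theta}=Q\Lambda Q^\top$ with $Q$ orthogonal and every eigenvalue $\lambda_i\ge\mu$.

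Second, I will show $H_{\hat\theta}^{-1}\preceq \mu^{-1} I$. From the decomposition, $H_{\hat\theta}^{-1}=Q\Lambda^{-1}Q^\top$, and each eigenvalue $\lambda_i^{-1}$ lies in $(0,1/\mu]$. Equivalently, one can invoke operator monotonicity of $A\mapsto -A^{-1}$ on positive definite matrices applied to $H_{\hat\theta}\succeq\mu I$.

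Third, I set $g=\nabla_\theta\ell(z,\hat\theta)$ and $v=Q^\top g$. Then
\[
g^\top H_{\hat\theta}^{-1}g=\sum_i \lambda_i^{-1}v_i^2\le \frac{1}{\mu}\sum_i v_i^2=\frac{1}{\mu}\|g\|_2^2,
\]
using $\|v\|_2=\|g\|_2$ because $Q$ is orthogonal. This is exactly the claimed bound. Alternatively, the same inequality is the Rayleigh quotient bound $\lambda_{\max}(H_{\hat\theta}^{-1})=1/\lambda_{\min}(H_{\hat\theta})\le 1/\mu$.

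There is no real obstacle, only a bookkeeping point: the proposition needs $H_{\hat\theta}$ to be symmetric, which holds for the Hessian of a twice continuously differentiable empirical objective. I will state this explicitly. I will also remark that for deep networks the condition $H_{\hat\theta}\succeq\mu I$ is a modeling assumption, for example obtained via weight decay or a damped Hessian $H+\mu I$. It is not a derived property. In that sense the proposition justifies the per-sample gradient norm, and hence LNG and GD, as an upper-bound proxy for self-influence.
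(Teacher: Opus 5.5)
Your proposal is correct and follows the same route as the paper. The paper also passes from $H_{\hat\theta}\succeq\mu I$ to $H_{\hat\theta}^{-1}\preceq\mu^{-1}I$ via $\lambda_{\max}(H_{\hat\theta}^{-1})=1/\lambda_{\min}(H_{\hat\theta})\le 1/\mu$, and then bounds $g^\top H_{\hat\theta}^{-1}g$ by $\mu^{-1}\|g\|_2^2$; your explicit eigendecomposition and the remark on symmetry of the Hessian fill in details the paper leaves implicit.
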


\begin{proof}
Since $H_{\hat\theta}\succeq \mu I$, we have $H_{\hat\theta}^{-1}\preceq \frac{1}{\mu}I$.
Therefore $g^\top H_{\hat\theta}^{-1} g \le \frac{1}{\mu} g^\top g$ with $g=\nabla_\theta \ell(z,\hat\theta)$, because $\lambda_{\min}(H_{\hat{\theta}})\ge \mu>0$ implies $\lambda_{\max}(H_{\hat{\theta}}^{-1})\le 1/\mu$.
\end{proof}

A sample $(x,y)$ that was part of the training set has, by definition, contributed to the learned parameters $\hat{\theta}$. 
However, for modern over-parameterized and non-convex models, it is generally \emph{not} guaranteed that every training sample satisfies $\|\nabla_\theta \ell(f_{\hat{\theta}}(x),y)\|\approx 0$ at convergence. 
What we can rely on is a weaker and standard stationarity implication: at a (local) stationary point of ERM, the \emph{aggregate} gradient over the training set vanishes (or is small), and well-fitted samples typically exhibit smaller per-example gradients in practice.

Therefore, GD should be interpreted as a \emph{diagnostic proxy}: a low GD value on an audit point $x$ indicates that $\advf$ treats $x$ as a low-gradient point under its learned representation, which is \emph{consistent with} (but does not mathematically guarantee) $x$ being ``seen-like'' data for $\advf$.
This is sufficient for our detection goal, where GD is used as one component in a multi-scale evidence set rather than as a stand-alone proof of membership.

\subsection{Justification of DD}
\label{app:dd}
In our implementation, DD is computed on randomized representations $\psi_\theta(\mathbf{x})$.
Specifically, $\psi_\theta$ is a two-layer ReLU MLP with hidden dimension 128 and output dimension 128, whose parameters $\theta$ are sampled i.i.d.\ from $\mathcal{N}(0,1)$ and kept fixed; each draw of $\theta$ yields one random neural feature map. For vision tasks, the input is the flattened raw pixel vector; for text tasks, the input is the BERT \texttt{[CLS]} embedding (Appendix~\ref{appendix:additional_detail}).

DD (Eq.~\eqref{eq:dd}) is a randomized \emph{mean-embedding discrepancy}: it compares the empirical feature means of $\psi_\theta(\cdot)$ under two datasets and averages over random draws of $\theta$.
This is motivated by the Maximum Mean Discrepancy (MMD), which is defined as a distance between distributions via differences of mean embeddings in an RKHS.
Specifically, given two probability distributions $P$ and $Q$, and a Reproducing Kernel Hilbert Space (RKHS) $\mathcal{H}$ with kernel $k(\cdot, \cdot)$, the squared MMD is the distance between the mean embeddings of the distributions in $\mathcal{H}$:
\begin{equation}
MMD^2(P, Q) = \|\mu_P - \mu_Q\|^2_{\mathcal{H}},
\end{equation}
where $\mu_P = \expectedvalue_{x \sim P}[k(x, \cdot)]$ is the mean embedding of $P$. This can be expanded as:
\begin{equation}
    MMD^2(P, Q) = \expectedvalue_{x,x' \sim P}[k(x,x')] + \expectedvalue_{y,y' \sim Q}[k(y,y')] - 2\expectedvalue_{x \sim P, y \sim Q}[k(x,y)]
\end{equation}

When $\psi_\theta$ is chosen as a random-feature map for a characteristic kernel (\eg, random Fourier features for Gaussian RBF), DD recovers a standard random-feature approximation of MMD and inherits its identifiability properties in the limit.
Our instantiation uses random neural feature maps as a practical surrogate while preserving the key structural principle: matching distributions implies matching feature expectations.
The theoretical motivation comes from the following theorem about MMD.

\begin{theorem}[Identifiability of MMD under characteristic kernels~\cite{DBLP:journals/jmlr/SriperumbudurGFSL10,DBLP:journals/jmlr/GrettonBRSS12}]
\label{thm:mmd_characteristic}
Let $k$ be a characteristic kernel on $\mathcal{X}$ with RKHS $\mathcal{H}_k$.
Then the kernel mean embedding $\mu_P := \mathbb{E}_{x\sim P}[k(x,\cdot)]\in\mathcal{H}_k$ is injective, and
\begin{equation}
\mathrm{MMD}_k(P,Q) \;=\; \|\mu_P-\mu_Q\|_{\mathcal{H}_k} \;=\; 0
\quad\Longleftrightarrow\quad
P=Q .
\end{equation}
\end{theorem}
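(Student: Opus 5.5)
The statement is Theorem~\ref{thm:mmd_characteristic}, the identifiability of MMD under a characteristic kernel. By definition, a kernel $k$ is \emph{characteristic} exactly when the mean-embedding map $P\mapsto\mu_P$ is injective on probability measures. So the work is to show that $\mu_P$ is a well-defined element of $\mathcal{H}_k$, that $\mathrm{MMD}_k$ equals $\|\mu_P-\mu_Q\|_{\mathcal{H}_k}$, and then to read off the equivalence.

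\textbf{Step 1: well-definedness.} Assume, as is standard, that $k$ is measurable and $\mathbb{E}_{x\sim P}\sqrt{k(x,x)}<\infty$; this holds automatically for bounded kernels. Then $x\mapsto k(x,\cdot)$ is Bochner integrable in $\mathcal{H}_k$, because $\|k(x,\cdot)\|_{\mathcal{H}_k}=\sqrt{k(x,x)}$. Alternatively, the functional $f\mapsto\mathbb{E}_P f(x)$ is bounded on $\mathcal{H}_k$, with norm at most $\mathbb{E}_P\sqrt{k(x,x)}$, by the reproducing property and Cauchy--Schwarz. Riesz representation then gives a unique $\mu_P\in\mathcal{H}_k$ with $\langle f,\mu_P\rangle=\mathbb{E}_P f(x)$ for every $f\in\mathcal{H}_k$.

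\textbf{Step 2: the norm identity.} Take the IPM form $\mathrm{MMD}_k(P,Q)=\sup_{\|f\|\le1}\big(\mathbb{E}_Pf-\mathbb{E}_Qf\big)$. By Step 1 this equals $\sup_{\|f\|\le1}\langle f,\mu_P-\mu_Q\rangle$, and the supremum is attained at $f=(\mu_P-\mu_Q)/\|\mu_P-\mu_Q\|$. Hence $\mathrm{MMD}_k(P,Q)=\|\mu_P-\mu_Q\|_{\mathcal{H}_k}$. Expanding the squared norm with the reproducing property recovers the three-term formula given before the theorem.

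\textbf{Step 3: the equivalence.} If $P=Q$, then $\mu_P=\mu_Q$ and the MMD is zero. Conversely, if the norm is zero, then $\mu_P=\mu_Q$, and injectivity of the embedding (the characteristic property) gives $P=Q$.

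\textbf{Main obstacle.} If one uses a concrete sufficient condition rather than the abstract definition of characteristic, then proving injectivity is the real content of the theorem. For a translation-invariant kernel $k(x,y)=\phi(x-y)$ on $\mathbb{R}^d$, Bochner's theorem gives $\phi$ as the Fourier transform of a finite measure $\Lambda$. Then
\[
\|\mu_P-\mu_Q\|^2=\int|\hat P(\omega)-\hat Q(\omega)|^2\,d\Lambda(\omega).
\]
When $\mathrm{supp}\,\Lambda=\mathbb{R}^d$, as for the Gaussian RBF kernel, this forces $\hat P=\hat Q$ almost everywhere. Continuity of characteristic functions upgrades this to equality everywhere, and uniqueness of characteristic functions then gives $P=Q$. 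The same Fourier formula is what links DD's random-Fourier-feature instantiation back to this result. For universal kernels on a compact space, one can instead argue that $\mathcal{H}_k$ is dense in $C(\mathcal{X})$. Then $\mathbb{E}_Pf=\mathbb{E}_Qf$ for all $f\in\mathcal{H}_k$ extends to all continuous $f$, and the Riesz--Markov theorem gives $P=Q$. In both routes, the step needing care is the Fubini or density interchange.
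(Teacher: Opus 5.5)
Your proof is correct and follows the same route as the paper's sketch. The paper takes $\mathrm{MMD}_k(P,Q)=\|\mu_P-\mu_Q\|_{\mathcal{H}_k}$ as the definition and then invokes injectivity of $P\mapsto\mu_P$ (the characteristic property), exactly as in your Step~3. Your Steps~1--2 (existence of $\mu_P$ via Riesz under $\mathbb{E}_P\sqrt{k(x,x)}<\infty$, and the norm identity derived from the IPM form) and your discussion of concrete sufficient conditions add rigor the paper leaves to its citations, but they do not change the argument.
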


\begin{proof}[Proof sketch]
By definition, $\mathrm{MMD}_k(P,Q)=\|\mu_P-\mu_Q\|_{\mathcal{H}_k}$.
Thus $\mathrm{MMD}_k(P,Q)=0$ iff $\mu_P=\mu_Q$.
For characteristic kernels, the mean embedding map $P\mapsto \mu_P$ is injective, i.e.,
$\mu_P=\mu_Q \Rightarrow P=Q$.
Therefore $\mathrm{MMD}_k(P,Q)=0$ iff $P=Q$.
\end{proof}

\iheading{Implication for distribution-level regeneration.}
A successful ``polishing'' attack aims to make the adversary distribution $\cP_\cA$ match the victim distribution $\cP_\cV$,
i.e., $\cP_\cA \approx \cP_\cV$.
In particular, if $\cP_\cA=\cP_\cV$, then for any fixed $\theta$ we have $\mathbb{E}_{x\sim \cP_\cA}[\psi_\theta(x)]=\mathbb{E}_{x\sim \cP_\cV}[\psi_\theta(x)]$, so the population version of DD is $0$.
Therefore, the more the adversary succeeds in matching $\cP_\cV$, the smaller DD tends to be, increasing the likelihood of being flagged by \boxname.

\subsection{Derivation of the Detection-Utility Trade-off Bound}
\label{app:bound}
Here, we provide the formal derivation for \Cref{thm:tradeoff}. We model the adversary's problem using the standard framework of domain adaptation.
\begin{itemize}
    \item \textbf{Source Domain ($S$):} The adversary's distribution $\mathcal{P}_{\mathcal{A}}$.
\item \textbf{Target Domain ($T$):} The victim's distribution $\mathcal{P}_{\mathcal{V}}$.
\end{itemize}
The adversary trains on the source domain, minimizing $R_{S}(h)=\expectedvalue_{(x,y)\sim \mathcal{P}_{\mathcal{A}}}[\ell(h(x),y)]$,
while utility is evaluated on the target domain $R_{T}(h)=\expectedvalue_{(x,y)\sim \mathcal{P}_{\mathcal{V}}}[\ell(h(x),y)]$.

The key to bounding the relationship between these two risks is the $\mathcal{H}\Delta\mathcal{H}$-divergence, which measures the dissimilarity between two distributions $P$ and $Q$ relative to a hypothesis class $\mathcal{H}$.
\begin{definition}
The $\mathcal{H}\Delta\mathcal{H}$-divergence between two distributions $P$ and $Q$ is:
$$
d_{\mathcal{H}\Delta\mathcal{H}}(P, Q) = 2 \sup_{h, h' \in \mathcal{H}} |\Pr_{x \sim P}[h(x) \ne h'(x)] - \Pr_{x \sim Q}[h(x) \ne h'(x)]|
$$
This quantity measures the maximum disagreement between the two domains over the regions where any two hypotheses in $\mathcal{H}$ disagree.
\end{definition}

\bheading{Risk Definition.}
For clarity, we state the standard binary-classification form with the $0$--$1$ loss.
For a hypothesis $h\in\mathcal H$, define the source/target risks as
\begin{equation}
R_S(h) \triangleq \Pr_{(x,y)\sim \cP_\cA}\left[h(x)\neq y\right],\quad
R_T(h) \triangleq \Pr_{(x,y)\sim \cP_\cV}\left[h(x)\neq y\right].
\end{equation}
Since under the $0$--$1$ loss and for any hypothesis $h\in\mathcal{H}$~\cite{ben2010theory}:
\begin{equation}
R_T(h) \le R_S(h) + \tfrac{1}{2}\, d_{\mathcal{H}\Delta\mathcal{H}}(\cP_\cA, \cP_\cV) + \lambda,
\end{equation}
where $\lambda = \min_{h^* \in \mathcal{H}} (R_S(h^*) + R_T(h^*))$ is the \textit{ideal joint error}, representing the minimum achievable error on both domains by an optimal hypothesis $h^*$.
Substituting our domain definitions, we get \Cref{thm:tradeoff}:
\begin{equation}
R_{\mathcal{P}_{\mathcal{V}}}(h) \le R_{\mathcal{P}_{\mathcal{A}}}(h) + \tfrac{1}{2}\, d_{\mathcal{H}\Delta\mathcal{H}}(\mathcal{P}_{\mathcal{A}}, \mathcal{P}_{\mathcal{V}}) + \lambda
\end{equation}
\textbf{Implication:} This bound formalizes a utility--divergence trade-off:
\begin{itemize}
    \item $R_{\mathcal{P}_{\mathcal{V}}}(h)$ is the adversary's utility loss (they want this to be low).
    \item $R_{\mathcal{P}_{\mathcal{A}}}(h)$ is the risk on their own data (they can minimize this via training).
    \item $d_{\mathcal{H}\Delta\mathcal{H}}(\mathcal{P}_{\mathcal{A}}, \mathcal{P}_{\mathcal{V}})$ captures a hypothesis-disagreement notion of domain shift between the adversary and victim distributions.
    \item $\lambda$ is the ideal joint error, determined by how well the best hypothesis in $\mathcal{H}$ can perform simultaneously on both domains.
\end{itemize}
The inequality shows that as cross-domain discrepancy increases,
the upper bound on the victim-domain risk becomes looser, making high-utility transfer harder to guarantee.
Therefore, to maintain high utility on the victim-side distribution, the adversary generally needs to keep the cross-domain discrepancy controlled, which constrains how far the regenerated distribution can drift to evade distribution-level detection in our evaluated settings.

\subsection{False Positive and Power Analysis}
\label{app:fp_power}

We now analyze the final judgment as a statistical test.
For each metric $\lambda\in\{OD,GD,DD\}$, let $S_\lambda$ be the measured value.
The null hypothesis $H_0$ is that the suspect dataset is independently constructed and the alternative hypothesis $H_1$ is that the suspect dataset is regenerated from $\vicdst$.
Since smaller distances indicate higher similarity to $\vicdst$, the rejection region is $S_\lambda\leq\tau_\lambda$.

\bheading{False Positive \& Threshold.}
For a fixed task and a single metric, the false-positive rate is $\Pr[S_\lambda\leq\tau_\lambda\mid H_0]$.
The threshold $\tau_\lambda=\alpha_\lambda\cdot\min_i s^i_{neg}$ defined in \Cref{subsec:judgement} uses the closest available negative score for that task.
If the $m$ negatives used to set the threshold and a future independent dataset for the same task are drawn in the same way from the same population, the future negative is equally likely to take any rank among the $m+1$ scores and therefore falls below all $m$ negatives with probability $1/(m+1)$.
When $\alpha_\lambda<1$, the threshold is smaller than the closest negative score, making the test more conservative under the same assumptions.
This bound is limited when only a few negatives are available and does not apply when those negatives do not represent future honest datasets, such as common-source datasets close to $\vicdst$.
Algorithm~\ref{alg:judge} combines multiple metrics. Therefore, the overall false positive probability can be larger than the false positive probability of a single metric. A direct union-bound upper bound gives $\Pr[\mathrm{FP}]\leq \Pr[S_{OD}\leq\tau_{OD}\mid H_0]+\Pr[S_{GD}\leq\tau_{GD}\mid H_0]+\Pr[S_{DD}\leq\tau_{DD}\mid H_0]$ over the decisive distance metrics available to the auditor. This bound explains why each metric threshold should be conservative. It also explains why the judgment should be treated as technical evidence rather than a standalone proof.

\begin{table}[t]
\centering
\caption{Leave-one-negative-out threshold sensitivity based on the negative datasets in \Cref{tab:a1mod_all_distance,tab:a1mod_all_distance_agnews}. Thresholds are set separately within each task, and the false positives are aggregated over four tasks.}
\label{tab:false_positive_checks}
\resizebox{0.6\linewidth}{!}{
\begin{tabular}{cccc}
\toprule
\textbf{Threshold ratio $\alpha_\lambda$} & \textbf{FP/Total} & \textbf{FPR (\%)} & \textbf{TNR (\%)} \\ \midrule
0.80 & 3/12 & 25.0 & 75.0 \\
0.90 & 5/12 & 41.7 & 58.3 \\
0.95 & 6/12 & 50.0 & 50.0 \\
1.00 & 8/12 & 66.7 & 33.3 \\ \bottomrule
\end{tabular}}
\end{table}

To empirically understand how $\alpha_\lambda$ affects false positives, in \Cref{tab:false_positive_checks}, we test threshold sensitivity by varying $\alpha_\lambda$ for the negative datasets in \Cref{tab:a1mod_all_distance,tab:a1mod_all_distance_agnews}.
For each task, we leave one negative out and use the other two negatives from the same task to set the thresholds.
In total, we have 12 cases aggregating three held-out choices for each of the four tasks.
At the default $\alpha_\lambda=0.9$, the 5 leave-one-out false positives come from the closest held-out negatives (MNIST Neg-3, CIFAR-10 Neg-2, FairFace Neg-1, and AG-News Neg-1/Neg-3).
Specifically, MNIST Neg-3 and AG-News Neg-3 are triggered by model-level metrics, while CIFAR-10 Neg-2, FairFace Neg-1, and AG-News Neg-1 are triggered by DD.
The results with $\alpha_\lambda\in\{0.80,0.90,0.95,1.00\}$ show that larger $\alpha_\lambda$ is more permissive and increases FPR while reducing TNR.
We also check the positive side on \Cref{tab:a3_robustness}: after setting each task's DD threshold from its three negatives, $\alpha_\lambda=0.80$ would miss three DD positives in CIFAR-10 (13/16 detected), while $\alpha_\lambda\geq0.90$ detects all 16 DD positives.
Together, we observe the threshold-ratio trade-off: a smaller $\alpha_\lambda$ can reduce false positives but miss positives, while reliability still depends on the number and representativeness of task-specific negatives.

\bheading{Power analysis.}
The power of a metric test is $\Pr[S_\lambda\leq\tau_\lambda\mid H_1]$.
Suppose metric $\lambda$ separates regenerated datasets from the negatives used to set the threshold.
Let $\mu_0$ and $\mu_1$ be the expected metric values under independent and regenerated datasets, with $\mu_1<\mu_0$. Under a normal or delta-method approximation, the power is approximately $\Phi((\tau_\lambda-\mu_1)/(\sigma_1/\sqrt{n_{\mathrm{eff}}}))$, where $\sigma_1^2$ is the variance under regenerated datasets and $n_{\mathrm{eff}}$ is the effective number of audited measurements. When $\mu_1<\tau_\lambda$, this approximation suggests that power increases with larger positive-negative separation, larger audit-set or random-feature sample size, and lower metric variance. This formula is used to explain the trend rather than to provide an exact guarantee.
For OD, this means that a larger gap between the OD values of regenerated and negative models, together with more audit samples, makes the OD test more likely to trigger on regenerated datasets.
For GD, the interpretation applies when regenerated and independent datasets produce separated gradient statistics, since GD is a diagnostic proxy rather than a per-sample guarantee.
For DD, this trend is consistent with its mean-embedding interpretation in \Cref{app:dd}: more audited samples or random feature draws reduce the variance of the empirical mean discrepancy when the population DD values are separated.
For SD, $|\auditset^D|\cdot SD$ behaves like a count of matched audit samples under a binomial approximation, so the probability of observing enough matches increases with the true overlap ratio and with $|\auditset^D|$. In \Cref{alg:judge}, SD mainly provides supporting evidence for sample-level and set-level regeneration after the model metrics trigger the judgment.
Overall, the power analysis suggests that increasing the audit-set size or the number of random feature draws reduces estimation noise and improves detection power, while close negatives or small audit sets can make regenerated datasets harder to separate from independent ones.

\end{document}